\newcolumntype{Y}{>{\centering\arraybackslash}X} 
\DeclarePairedDelimiter\floor{\lfloor}{\rfloor}
\newcommand{\RR}{\mathbb{R}}
\DeclareMathOperator{\E}{\mathbb{E}}
\DeclareMathOperator{\Prob}{\mathbb{P}}
\DeclareMathOperator{\Ind}{\mathds{1}}
\DeclareMathOperator{\abs}{\text{abs}_{\infty}}
\DeclareMathOperator{\sgn}{\text{sgn}}
\newtheorem{theorem}{Theorem}[section]
\newtheorem{proposition}[theorem]{Proposition}
\newtheorem{lemma}[theorem]{Lemma}
\numberwithin{equation}{section}
\title{Strong order 1/2 convergence of full truncation Euler\\ approximations to the Cox--Ingersoll--Ross process}
\author{\scshape{andrei cozma}\thanks{\footnotesize\scshape{Mathematical Institute, University of Oxford, Oxford OX2 6GG, United Kingdom}\protect\\ \footnotesize\scshape{The first author gratefully acknowledges financial support from the Engineering and Physical Sciences Research Council (research grant EPSRC EP/N509711/1).}} \and \scshape{christoph reisinger\footnotemark[1]}}
\date{}
\begin{document}
\maketitle

\begin{abstract}
\noindent
We study convergence properties of the full truncation Euler scheme for the Cox--Ingersoll--Ross process in the regime where the boundary point zero is inaccessible. Under some conditions on the model parameters (precisely, when the Feller ratio is greater than three), we establish the strong order 1/2 convergence in $L^{p}$ of the scheme to the exact solution. This is consistent with the optimal rate of strong convergence for approximations to the Cox--Ingersoll--Ross process based on sequential evaluations of the driving Brownian motion.

\vspace{1em}
\noindent
\textbf{Keywords:} Cox--Ingersoll--Ross process, strong convergence order, explicit full truncation Euler scheme

\vspace{.6em}
\noindent
\textbf{Mathematics Subject Classification (2010):} 60H35, 65C05, 65C30
\end{abstract}

\section{Introduction}\label{sec:intro}

Let $\big(\Omega,\mathcal{F},(\mathcal{F}_{t})_{t\geq0},\mathbb{P}\big)$ be a filtered probability space and let $W^{v}=\big(W^{v}_{t}\big)_{t\geq0}$ be a 1-dimensional~$\mathcal{F}_{t}$-adapted Brownian motion. A Cox--Ingersoll--Ross (CIR) process is defined by the stochastic differential equation (SDE):
\begin{equation}\label{eq1.1}
dv_{t} = k(\theta-v_{t})dt + \xi\sqrt{v_{t}}\,dW^{v}_{t},
\end{equation}
where $v_{0}$, $k$, $\theta$ and $\xi$ are strictly positive real numbers. The SDE admits a unique strong solution, which is strictly positive when $2k\theta\geq\xi^{2}$ by the Feller test \cite{Karatzas:1991}. The CIR process was originally introduced in finance to model the short-term interest rate \cite{Cox:1985}. Nowadays, due to its desirable properties, like non-negativity, mean-reversion and analytical tractability, it plays a key role in the field of option pricing, for instance when modeling squared volatilities in the Heston model \cite{Heston:1993}. For a given $t\geq0$, the CIR process $v_{t}$ has a noncentral chi-squared conditional distribution and its increments can be simulated exactly \cite{Broadie:2006}. However, when pricing financial derivatives written on an underlying process $S=(S_{t})_{t\in[0,T]}$ modeled by a $d$-dimensional SDE, with CIR dynamics in one or more components, we need to evaluate
\begin{equation}\label{eq1.2}
\E\left[f(\hspace{.25pt}S\hspace{.75pt})\right],
\end{equation}
where $f:\mathcal{C}([0,T],\RR^{d})\rightarrow\RR$ is the discounted payoff. This expectation is rarely available in closed form, nor can $S$ be sampled exactly. In this case, we employ Monte Carlo simulation methods \cite{Glasserman:2004} and approximate the solution to the SDE using a suitable discretization scheme. Due to the non-zero probability of the approximation process becoming negative, the standard Euler--Maruyama scheme applied to \eqref{eq1.1} is not well-defined. Possible remedies are to set the process equal to zero when it turns negative (absorption fix), e.g., the full truncation Euler (FTE) scheme studied in this paper, or reflect it in the origin (reflection fix). An overview of the explicit Euler schemes with different fixes at the boundary can be found in \cite{Lord:2010}. Alternatively, we can use an implicit Euler or a Milstein scheme to discretize the CIR process.

Weak convergence is important when estimating expectations of payoffs such as the one in \eqref{eq1.2}. However, strong convergence plays a crucial role in multilevel Monte Carlo methods \cite{Giles:2008,Giles:2009,Kebaier:2005} and may be required for some complex path-dependent derivatives. Furthermore, pathwise convergence follows automatically \cite{Kloeden:2007}.

The classical convergence theory \cite{Higham:2002,Kloeden:1999} does not apply to the CIR process because the square-root diffusion coefficient is not Lipschitz. Consequently, a considerable amount of research has been devoted to the numerical approximation of \eqref{eq1.1} and alternative approaches have been employed to prove the strong convergence of various discretizations for the CIR process.

Results in the literature concerned with positive strong convergence rates for numerical approximations to the CIR process were restricted to the regime where the boundary point is inaccessible until just recently. Strong convergence, at best with a logarithmic rate, of different Euler schemes including the partial truncation, the full truncation, the reflection and the symmetrized Euler schemes was established in \cite{Alfonsi:2005,Deelstra:1998,Gyongy:2011,Higham:2005,Lord:2010}. The first non-logarithmic rate was obtained in \cite{Berkaoui:2008}, where it was shown that the symmetrized Euler scheme converges strongly with the standard order 1/2 to the exact solution, although in a very restricted parameter regime. Strong convergence with order 1/2 of the backward (drift-implicit) Euler--Maruyama (BEM) scheme was later established in \cite{Dereich:2012}, and this rate was improved to 1 in \cite{Alfonsi:2013,Neuenkirch:2014}. Recently, \cite{Bossy:2016} proved the strong convergence with order 1 of the symmetrized Milstein scheme under some restrictive conditions on the parameters, whereas \cite{Chassagneux:2016} proved the strong convergence of a modified Euler--Maruyama scheme with an order between 1/6 and 1 that depends on the parameters.

In the last few years, there has been some development in the accessible boundary case and polynomial rates of strong convergence for an order of up to 1/2 were established in \cite{Hefter:2016} and \cite{Hutzenthaler:2014} for the truncated Milstein scheme and the BEM scheme, respectively.

In this paper, we study the full truncation Euler scheme proposed in \cite{Lord:2010}. This scheme preserves the positivity of the original process, is easy to implement and hence arguably the most widely used scheme in practice. Perhaps most importantly, it is found empirically to produce the smallest bias of all explicit Euler schemes with different fixes at the boundary \cite{Lord:2010}. Consider a uniform grid
\begin{equation}\label{eq2.2.1}
T = N\delta t,\hspace{1em} t_{n}=n\delta t,\hspace{1em} \forall\hspace{1pt} n\in\{0,1,...,N\}.
\end{equation}
We introduce the discrete-time auxiliary process
\begin{equation}\label{eq2.2.2}
\tilde{v}_{t_{n+1}} = \tilde{v}_{t_{n}} + k(\theta-\tilde{v}_{t_{n}}^{+})\delta t + \xi\sqrt{\tilde{v}_{t_{n}}^{+}}\,\delta W^{v}_{t_{n}},\hspace{.75em} \tilde{v}_{0}=v_{0},
\end{equation}
where $v^{+} = \max\left(0,v\right)$ and $\delta W^{v}_{t_{n}} = W^{v}_{t_{n+1}} - W^{v}_{t_{n}}$, its continuous-time interpolation
\begin{equation}\label{eq2.2.3}
\tilde{v}_{t} = \tilde{v}_{t_{n}} + k(\theta-\tilde{v}_{t_{n}}^{+})(t-t_{n}) + \xi\sqrt{\tilde{v}_{t_{n}}^{+}}\big(W^{v}_{t}-W^{v}_{t_{n}}\big)
\end{equation}
as well as the non-negative process
\begin{equation}\label{eq2.2.4}
\bar{v}_{t}=\tilde{v}_{t_{n}}^{+},
\end{equation}
whenever $t \in [t_{n},t_{n+1})$. The convergence in $L^{1}$ of this scheme was proved in \cite{Lord:2010}. The convergence rate, however, remained an open question and our Theorem \ref{Thm3.8} is the first result to address it, to the best of our knowledge. For convenience, define the Feller ratio
\begin{equation}\label{eq3.0.1}
\nu = \frac{2k\theta}{\xi^{2}}\hspace{1pt}.
\end{equation}
We establish the strong convergence in $L^{p}$ with order 1/2 of the scheme in the inaccessible boundary case, specifically, for a Feller ratio above three. Hence, we obtain the optimal strong convergence rate for numerical approximations to the Cox--Ingersoll--Ross process based on $N$ sequential evaluations of the Brownian driver \cite{Hefter:2017b}. The main and novel idea of the proof is to weight the difference between the process $v$ and its approximation $\tilde{v}$ by the former raised to a suitably chosen negative power and prove the strong convergence with a rate of the weighted error. This, in turn, allows us to derive an upper bound for the actual error.

\begin{theorem}\label{Thm3.8}
Suppose that $\nu>3$ and let $2\leq p<\nu-1$. Then the FTE scheme converges strongly in $L^{p}$ with order 1/2, i.e., there exist $N_{0}\in\mathbb{N}$ and a constant $C$ such that, for all $N>N_{0}$,
\begin{equation}\label{eq3.1.57}
\sup_{t\in\left[0,T\right]}\E\big[|v_{t}-\bar{v}_{t}|^{p}\big]^{\frac{1}{p}} \leq CN^{-\frac{1}{2}}.
\end{equation}
\end{theorem}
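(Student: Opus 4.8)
The plan is to control the actual error $v_t-\bar v_t$ by introducing the weighted error $e_t := v_t^{-q}(v_t-\tilde v_t)$ for a suitably chosen exponent $q>0$ (small, e.g. in $(0,1)$), applying It\^o's formula to it, and establishing an $L^r$-estimate of the form $\sup_{t}\E[|e_t|^r]\le C N^{-r/2}$ for an appropriate $r$; one then recovers the unweighted estimate \eqref{eq3.1.57} by H\"older's inequality, writing $|v_t-\tilde v_t|^p = (v_t^{q}|e_t|)^p$ and splitting the expectation using $\E[v_t^{qp\cdot a}]<\infty$, which holds for all positive moments of the exact CIR process, together with $\E[|e_t|^{pb}]\le CN^{-pb/2}$ for conjugate $a,b$. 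The condition $\nu>3$ and $p<\nu-1$ enter precisely here: they guarantee that the relevant negative moments $\E[v_t^{-m}]$ and $\E[\tilde v_t^{-m}]$ (or rather moments of the interpolant $\bar v_t$) are finite for the exponents forced on us by the computation, and that $q$ can be chosen in a nonempty admissible range. A separate reduction shows it suffices to bound $\E[|v_t-\tilde v_t|^p]$ at grid-adjacent times and then absorb the $|v_t-\bar v_t|$ difference using standard one-step moment bounds for the scheme, which cost only $O(\delta t)$.

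The core computation is the It\^o expansion of $d\big(v_t^{-q}(v_t-\tilde v_t)\big)$. Write $v_t-\tilde v_t$ using \eqref{eq1.1} and \eqref{eq2.2.3}: the drift difference is $k(\tilde v_{t_n}^+-v_t)\,dt = k(\bar v_t - v_t)\,dt$, and the diffusion difference is $\xi(\sqrt{v_t}-\sqrt{\bar v_t})\,dW^v_t$. Multiplying by $v_t^{-q}$ and accounting for the $d(v_t^{-q})$ term and the cross-variation produces several groups of terms. The benign ones are Lipschitz-like: $\xi v_t^{-q}(\sqrt{v_t}-\sqrt{\bar v_t})$, which one rewrites as $\xi v_t^{-q}(v_t-\bar v_t)/(\sqrt{v_t}+\sqrt{\bar v_t})$ and bounds by a constant times $v_t^{-q-1/2}|v_t-\tilde v_t|$ plus a term involving $|\bar v_t-\tilde v_t| = \tilde v_{t_n} - \tilde v_{t_n}^+ = \tilde v_{t_n}^-$, the negative part of the scheme, which is where the truncation ``defect'' shows up. The dangerous terms are the ones coming from $d(v_t^{-q})$: by It\^o, $d(v_t^{-q}) = v_t^{-q}\big(-q v_t^{-1}(k(\theta-v_t)) + \tfrac12 q(q+1)v_t^{-2}\xi^2 v_t\big)dt - q\xi v_t^{-q-1/2}dW^v_t$, so multiplying by $(v_t-\tilde v_t)$ gives a term $\sim -qk\theta v_t^{-q-1}(v_t-\tilde v_t)$ with a potentially explosive $v_t^{-q-1}$ factor. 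The saving grace is sign: for $q$ chosen appropriately this leading singular term, together with the cross-variation term $-q\xi^2 v_t^{-q-1/2}\cdot\xi(\sqrt{v_t}-\sqrt{\bar v_t})$, combines into something with a favourable sign when the weighted error is large, giving a genuine mean-reversion/damping effect in the $e_t$ dynamics rather than a blow-up.

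Concretely, after taking $|e_t|^r$ (or a smoothed version, then removing the smoothing), applying It\^o again and taking expectations, I expect an estimate of the schematic form
\begin{equation*}
\frac{d}{dt}\E[|e_t|^r] \leq -c_1\,\E[|e_t|^r] + C_2\,\E[|e_t|^{r-1}\cdot v_t^{-q-1/2}|\bar v_t - \tilde v_t|] + (\text{lower-order}),
\end{equation*}
where the first term comes from the damping just described and the second is the truncation defect. One then estimates $\E[\,\tilde v_{t_n}^-\,]$ and its higher moments: these are $O(\delta t^{?})$ because the scheme is negative only with small probability and by a small amount, and crucially one needs $\E[(v_t^{-q-1/2})^{\text{something}}\cdot(\tilde v_{t_n}^-)^{\text{something}}]$, handled by H\"older together with finiteness of negative moments of $v$ (requiring $\nu$ large) and of the scheme (a separate lemma, presumably established earlier). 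Applying Gr\"onwall to the differential inequality gives $\sup_t\E[|e_t|^r]\le CN^{-r/2}$, and the H\"older recombination finishes the proof. \emph{The main obstacle} is the simultaneous control of negative moments of $v_t$ \emph{and} of the FTE interpolant $\bar v_t$ along the trajectory — in particular establishing uniform-in-$N$ bounds on $\E[\bar v_t^{-m}]$ for $m$ up to the order dictated by $q+1/2$ and the H\"older exponents — since the scheme, unlike the exact process, has no intrinsic mechanism keeping it away from zero; this is exactly what forces the quantitative threshold $\nu>3$ and the restriction $p<\nu-1$, and squeezing $q$ into the resulting admissible window while keeping every exponent in the integrability range is the delicate bookkeeping at the heart of the argument.
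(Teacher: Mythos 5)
Your central idea --- weighting the error $v_t-\tilde v_t$ by a negative power of the exact process, applying It\^o's formula, and extracting a damping term of favourable sign from the combination of the singular drift contribution $-\beta k\theta\hspace{1pt}v_t^{-(\beta+1)}$ with the It\^o correction $\tfrac12\beta(\beta+1)\xi^2 v_t^{-(\beta+1)}$ (negative precisely when the weight exponent stays below $\nu-1$) --- is exactly the mechanism of the paper's proof, and your H\"older recombination at the end matches \eqref{eq3.1.81}. There are, however, two genuine problems. First, the ingredient you single out as the ``main obstacle'', namely uniform-in-$N$ bounds on $\E[\bar v_t^{-m}]$, neither exists nor is needed: $\bar v_t=\tilde v_{t_n}^{+}$ vanishes on the event $\{\tilde v_{t_n}\leq0\}$, which has strictly positive probability, so $\E[\bar v_t^{-m}]=\infty$ for every $m>0$ and every $N$. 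The paper's estimates are arranged so that only negative moments of the exact process ever appear: in \eqref{eq3.1.64} the factor $|\sqrt{v_u}-\sqrt{\bar v_u}|$ is always paired with $\sqrt{v_u}$, never with $\sqrt{\bar v_u}$, and in \eqref{eq3.1.78} H\"older is applied against $v_t^{-r(\beta+i)}$ only. Your plan as written would have to be reorganized so that $\bar v$ never lands in a denominator.

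Second, and more seriously, the control of the truncation defect $\E\big[|\tilde v_t-\bar v_t|^p\big]=\E\big[|\tilde v_{t_n}|^p\Ind_{\tilde v_{t_n}\leq0}\big]$, which you dismiss as ``$O(\delta t^{?})$ because the scheme is negative only with small probability and by a small amount'', is the technical heart of the argument and does not follow from standard one-step moment bounds. What is required is $\sup_{n}\Prob\big(\tilde v_{t_n}\leq0\big)\leq CN^{-\nu+\bar\nu+1}$ with an exponent large enough that, after H\"older against the positive moments of $\tilde v$, the defect is $O(N^{-p/2})$; this is Proposition \ref{Prop3.5}, proved via a backward induction over the time steps built on a recursively defined sequence of exponential-moment parameters and a Hurwitz zeta function bound, and it is also where the quantity $\bar\nu$ and part of the quantitative constraint on $\nu$ originate. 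A naive estimate (one step goes negative with conditional probability of order $e^{-cw/\delta t}$, summed over $N$ steps) does not yield a polynomial rate of the required order, because one must propagate the exponential smallness through all steps simultaneously. Without this estimate the source term in your differential inequality for the weighted error is uncontrolled, so the proposal has a genuine gap at this point even though its overall architecture coincides with the paper's.
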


We mention that the assumption on the Feller ratio from Theorem \ref{Thm3.8}, i.e., $\nu>3$, appears in the literature as a sufficient condition for the strong convergence with a rate of several discretization schemes for the CIR process. For example, this condition ensures in Corollary 4.1 in \cite{Chassagneux:2016} the strong convergence with order 1/2 (and order 1 if $\nu>5$) of the modified Euler--Maruyama scheme, and in Proposition 3.1 in \cite{Neuenkirch:2014} the strong convergence with order 1 of the BEM scheme.

In \cite{Hefter:2017}, a lower error bound was recently derived for all discretization schemes for the CIR process based on equidistant evaluations of the Brownian driver in the regime where the boundary point is accessible. In light of this result, we demonstrate numerically that the FTE scheme achieves an optimal performance -- in the $L^{1}$ sense -- in half of the regime where the boundary point is accessible, where by optimal we mean that the empirical $L^{1}$ convergence rate is the best possible for equidistant discretization schemes for the CIR process.

The remainder of this paper is structured as follows. In Section~\ref{sec:analysis}, we prove the convergence with a rate of the scheme. In Section~\ref{sec:numerics}, we conduct numerical tests for the rate of convergence that validate and complement our theoretical findings. Finally, Section~\ref{sec:conclusion} contains a short discussion.

\section{Convergence analysis}\label{sec:analysis}

We need to control the polynomial moments of the CIR process and its FTE discretization.

\begin{lemma}\label{Lem3.1}
The CIR process has bounded moments, i.e.,
\begin{equation}\label{eq3.0.2}
\sup_{t\in[0,T]}\E\big[v_{t}^{p}\big] < \infty,\hspace{1em} \forall\hspace{.5pt}p>-\nu.
\end{equation}
\end{lemma}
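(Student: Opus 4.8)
The plan is to handle positive and negative moments separately. For positive $p>0$, this is classical: one either invokes the known moment bounds for the CIR process (the conditional noncentral chi-squared law has all positive moments, with parameters depending continuously on $t$ over the compact interval $[0,T]$), or one applies It\^o's formula to $v_t^p$ together with Gr\"onwall's inequality, using that the drift $k(\theta-v_t)$ is affine and the diffusion contributes a term proportional to $v_t^{p-1}$; the resulting ODE for $\E[v_t^p]$ has a bounded solution on $[0,T]$. Either route gives $\sup_{t\in[0,T]}\E[v_t^p]<\infty$ for every $p\geq 0$.

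The substantive case is $-\nu<p<0$, where we must control inverse moments and exploit that the Feller condition pushes the process away from zero. Write $p=-q$ with $0<q<\nu$. The natural approach is to apply It\^o's formula to $v_t^{-q}$. Formally,
\begin{equation*}
d\big(v_t^{-q}\big) = -q\,v_t^{-q-1}\,k(\theta-v_t)\,dt + \frac{q(q+1)}{2}\,v_t^{-q-2}\,\xi^2 v_t\,dt - q\,v_t^{-q-1}\,\xi\sqrt{v_t}\,dW^v_t,
\end{equation*}
so the drift of $v_t^{-q}$ is
\begin{equation*}
\Big(-qk\theta + \tfrac{1}{2}q(q+1)\xi^2\Big)v_t^{-q-1} + qk\,v_t^{-q} = \tfrac{1}{2}q\xi^2\big(q+1-\nu\big)v_t^{-q-1} + qk\,v_t^{-q}.
\end{equation*}
The key observation is that the coefficient of the singular term $v_t^{-q-1}$ is $\tfrac{1}{2}q\xi^2(q+1-\nu)$, which is \emph{negative} precisely when $q<\nu-1$. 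In that subrange the singular term has a favourable sign and can be dropped, leaving $\E[v_t^{-q}]\leq v_0^{-q}+qk\int_0^t\E[v_s^{-q}]\,ds$, whence Gr\"onwall gives $\sup_{t\in[0,T]}\E[v_t^{-q}]\leq v_0^{-q}e^{qkT}<\infty$.

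To reach the full range $0<q<\nu$ (not just $q<\nu-1$), I would proceed by a bootstrap / interpolation argument: having established finiteness of $\E[v_t^{-q}]$ for $q$ up to just below $\nu-1$, one can iterate, or alternatively appeal directly to the explicit transition density. Concretely, $2c v_t$ conditioned on $v_0$ is noncentral $\chi^2$ with $\nu \cdot 2 / 2 = 2k\theta/\xi^2$... more precisely with $d=\nu$ degrees of freedom (here I use $\nu=2k\theta/\xi^2$) and noncentrality $\lambda_t$ proportional to $e^{-kt}$, for an appropriate $c=c(t)>0$; the negative moments of a noncentral $\chi^2_d(\lambda)$ variable are finite iff the order exceeds $-d/2\cdot 2 = -d$, i.e. here iff $q<\nu$, and they are bounded uniformly for $\lambda$ in compact sets and $d$ fixed. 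Since $c(t)$ and $\lambda_t$ vary continuously and stay in compact sets for $t\in[0,T]$ (with $c(t)$ bounded away from $0$ and $\infty$, and $\lambda_t$ bounded), one obtains $\sup_{t\in[0,T]}\E[v_t^{-q}]<\infty$ for all $q<\nu$.

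\textbf{Main obstacle.} The It\^o argument as written is only formal: $v_t^{-q}$ is not a polynomial and the application of It\^o's formula near $0$ requires justification, typically via localization with stopping times $\tau_m=\inf\{t:v_t\leq 1/m\}$ and a monotone/dominated convergence passage to the limit, using the sign of the singular term to control the localized expectations uniformly in $m$. The cleanest and most robust route is therefore the transition-density computation, where the only real work is recalling the finiteness criterion for negative moments of the noncentral chi-squared distribution and checking the uniform-in-$t$ bound via continuity of the density's parameters on the compact interval $[0,T]$; this is where I would concentrate the rigor, and I expect the localization-in-It\^o alternative to be the technically fussiest part if one prefers that path.
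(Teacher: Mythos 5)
The paper disposes of this lemma with a one-line citation (to \cite{Dereich:2012} and \cite{Hurd:2008}), and your overall strategy --- positive moments via It\^o/Gr\"onwall or the explicit law, negative moments via the sign of the singular drift term for small $q$ and via the noncentral chi-squared transition density for the full range --- is exactly what those references carry out, so the route is sound in outline. However, the part of your sketch that must deliver the hard range $\nu-1\le q<\nu$ contains concrete errors. First, the proposed bootstrap from the It\^o computation cannot work: for $q\ge\nu-1$ the coefficient of $v_u^{-q-1}$ in the drift is nonnegative, so you would need to control $\int_0^t\E\big[v_u^{-(q+1)}\big]du$, which requires $q+1<\nu$ --- precisely the range you already have; iterating gains nothing. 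Second, in the density argument the correct representation is $v_t=Y_t/(2c_t)$ with $Y_t$ noncentral chi-squared with $d=4k\theta/\xi^2=2\nu$ degrees of freedom (not $\nu$), and $\E\big[Y^{-q}\big]<\infty$ iff $q<d/2$ (not $q<d$); your two factor-of-two slips cancel to give the correct threshold $q<\nu$, but both statements as written are false. Third, and most substantively, the claim that $c(t)$ and $\lambda_t$ stay in compact sets on $[0,T]$ is wrong: $c_t=2k/\big(\xi^2(1-e^{-kt})\big)\to\infty$ and $\lambda_t=2c_tv_0e^{-kt}\to\infty$ as $t\downarrow0$, so the uniform-in-$t$ bound cannot be read off from ``parameters in a compact set.'' One must instead verify that $(2c_t)^q\,\E\big[Y_t^{-q}\big]$ stays bounded as $t\downarrow0$ (it tends to $v_0^{-q}$, since $Y_t$ concentrates near its noncentrality for large $\lambda_t$), or treat a neighbourhood of $t=0$ separately by the localized It\^o argument. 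With these repairs the proof goes through, but as written the endgame does not close.
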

\begin{proof}
Follows from \cite{Dereich:2012} or Theorem 3.1 in \cite{Hurd:2008}.
\end{proof}

\begin{lemma}\label{Lem3.3}
The FTE scheme has uniformly bounded moments, i.e.,
\begin{equation}\label{eq3.1.1}
\sup_{N\geq1}\hspace{1.5pt}\E\bigg[\sup_{t\in[0,T]}|\tilde{v}_{t}|^p\bigg] < \infty,\hspace{1em} \forall\hspace{.5pt}p\geq1.
\end{equation}
\end{lemma}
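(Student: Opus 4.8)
The plan is to bound the running supremum of $|\tilde v_t|^p$ by controlling the Euler scheme on the grid points first and then handling the interpolation error between grid points. The starting observation is that from \eqref{eq2.2.2} and the elementary inequality $\sqrt{\tilde v_{t_n}^+}\le 1+\tilde v_{t_n}^+ \le 1+|\tilde v_{t_n}|$, together with $|\theta - \tilde v_{t_n}^+|\le \theta + |\tilde v_{t_n}|$, one gets a linear-growth-type recursion: conditionally on $\mathcal{F}_{t_n}$, the increment $\tilde v_{t_{n+1}}-\tilde v_{t_n}$ has a drift part bounded by $C(1+|\tilde v_{t_n}|)\delta t$ and a martingale part with conditional variance bounded by $C(1+|\tilde v_{t_n}|)^2\delta t$. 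I would first prove the grid-point bound $\max_{0\le n\le N}\E[|\tilde v_{t_n}|^p]\le C$ uniformly in $N$, and then upgrade this to a bound on the supremum over the whole interval.

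For the grid-point moments, I would proceed for even integer $p$ by induction on $n$: expand $\E[|\tilde v_{t_{n+1}}|^p\mid\mathcal F_{t_n}]$ using the binomial theorem, use that the Brownian increment $\delta W^v_{t_n}$ is independent of $\mathcal F_{t_n}$ with Gaussian moments of order $(\delta t)^{j/2}$, and collect terms. The odd-order terms in $\delta W^v_{t_n}$ vanish in expectation, so every surviving contribution beyond the leading $|\tilde v_{t_n}|^p$ carries at least one factor of $\delta t$; this yields $\E[|\tilde v_{t_{n+1}}|^p\mid\mathcal F_{t_n}]\le (1+C\delta t)|\tilde v_{t_n}|^p + C\delta t(1+|\tilde v_{t_n}|^{p-1})$, and after taking expectations and using Young's inequality to absorb the lower-order term, a discrete Gr\"onwall argument gives $\E[|\tilde v_{t_n}|^p]\le e^{CT}(1+v_0^p)$ uniformly in $n$ and $N$. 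General real $p\ge 1$ then follows from Jensen/Lyapunov by bounding with an even integer exponent above $p$.

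To pass from grid points to the continuous-time supremum, I would use the interpolation \eqref{eq2.2.3}: for $t\in[t_n,t_{n+1})$, $\tilde v_t = \tilde v_{t_n} + k(\theta-\tilde v_{t_n}^+)(t-t_n) + \xi\sqrt{\tilde v_{t_n}^+}(W^v_t - W^v_{t_n})$, so $\sup_{t\in[t_n,t_{n+1}]}|\tilde v_t|\le |\tilde v_{t_n}| + C(1+|\tilde v_{t_n}|)\delta t + \xi(1+|\tilde v_{t_n}|)\sup_{t\in[t_n,t_{n+1}]}|W^v_t - W^v_{t_n}|$. Taking $p$-th powers, then the maximum over $n$, and then expectations, I would bound $\E[\sup_{t\le T}|\tilde v_t|^p]$ by $C\sum_{n=0}^{N-1}\E\big[(1+|\tilde v_{t_n}|)^p\sup_{t\in[t_n,t_{n+1}]}|W^v_t-W^v_{t_n}|^p\big] + C\max_n\E[|\tilde v_{t_n}|^p]$; applying the Burkholder--Davis--Gundy (or just the reflection-principle/Doob) bound $\E[\sup_{t\in[t_n,t_{n+1}]}|W^v_t-W^v_{t_n}|^p]\le C(\delta t)^{p/2}$ together with the independence of this quantity from $\mathcal F_{t_n}$ and the grid-point bound, the sum is $O(N\cdot (\delta t)^{p/2}) = O(N^{1-p/2})$, which is bounded (indeed vanishing for $p>2$). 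Combining gives \eqref{eq3.1.1}.

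The main technical obstacle is the bookkeeping in the binomial expansion for general even $p$: one must verify that all the ``bad'' terms — those involving $\sqrt{\tilde v_{t_n}^+}$ and positive powers of the potentially large $|\tilde v_{t_n}|$ — are genuinely accompanied by a power of $\delta t$ sufficient for the discrete Gr\"onwall to close, and that the truncation $(\cdot)^+$ does not spoil the sign cancellations (it does not, since $\tilde v^+_{t_n}\le|\tilde v_{t_n}|$ pointwise and the Gaussian increment is still centered). A minor subtlety is that the supremum inside the expectation in \eqref{eq3.1.1} forces the interpolation step above rather than a bare grid-point estimate, but since the CIR coefficients have only linear growth this is routine once the grid-point moments are in hand.
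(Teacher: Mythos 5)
Your argument is correct in substance and is essentially a self-contained version of what the paper outsources to a citation: the paper's ``proof'' is one line, invoking the Burkholder--Davis--Gundy inequality together with a discrete-time moment bound from an external reference (Proposition~3.7 in the cited companion paper), and that external result is proved by exactly the kind of linear-growth recursion you set up (the key point being $\sqrt{\tilde v_{t_n}^{+}}\le 1+|\tilde v_{t_n}|$ and $|\theta-\tilde v_{t_n}^{+}|\le\theta+|\tilde v_{t_n}|$, so the truncated coefficients have linear growth and the classical Euler moment bound applies). Where you diverge is in passing from grid points to the running supremum: the paper applies BDG to the continuous-time interpolation, i.e.\ bounds $\E\big[\sup_{t\le T}\big|\int_0^t\xi\sqrt{\bar v_s}\,dW^v_s\big|^p\big]$ by $C\,\E\big[\big(\int_0^T\xi^2\bar v_s\,ds\big)^{p/2}\big]$ and then uses the grid-point moments, whereas you use a union bound over the $N$ subintervals together with the reflection-principle estimate $\E\big[\sup_{[t_n,t_{n+1}]}|W^v_t-W^v_{t_n}|^p\big]\le C(\delta t)^{p/2}$. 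Your route is more elementary but lossier: the resulting sum is $O(N^{1-p/2})$, which is bounded only for $p\ge 2$ and actually \emph{grows} in $N$ for $1\le p<2$, contrary to your parenthetical claim that it is bounded. This is not a fatal gap --- Lyapunov's inequality $\E[\sup_t|\tilde v_t|^p]\le\E[\sup_t|\tilde v_t|^2]^{p/2}$ closes the case $1\le p<2$ from the case $p=2$, and you invoke exactly this device for the grid-point moments --- but you should state it explicitly for the supremum step as well, or use BDG on the interpolated process as the paper does, which handles all $p\ge 1$ (after the same reduction to $p\ge 2$) without the union bound.
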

\begin{proof}
Follows from a simple application of the Burkholder--Davis--Gundy (BDG) inequality and Proposition 3.7 in \cite{Cozma:2016a}.
\end{proof}

By construction, the FTE approximation $\bar{v}$ is non-negative. However, an important step in the convergence analysis lies in analyzing the behaviour of the auxiliary process $\tilde{v}$ at the boundary. The next result derives a polynomial upper bound in the time step size on the probability of $\tilde{v}$ becoming negative. Similar results were established for the symmetrized Euler scheme, in Lemma 3.7 in \cite{Bossy:2007}, and for the symmetrized Milstein scheme, in Lemma 2.2 in \cite{Bossy:2016}. However, the full truncation Euler scheme has led to different technical challenges and the arguments employed in the proofs of the aforementioned results do not apply here.
\begin{proposition}\label{Prop3.5}
Suppose that $\nu>2$ and let
\begin{equation}\label{eq3.1.5}
\bar{\nu} = \inf\left\{x>0:\,\frac{(4x\vee\nu)(\nu-x)}{\nu x(\nu-x-1)} < 1.99\sqrt{\nu\pi}\hspace{1pt}e^{\frac{\nu}{2}}-1\right\}.
\end{equation}
Then there exist $N_{0}\in\mathbb{N}$ and a constant $C$ such that, for all $N>N_{0}$,
\begin{equation}\label{eq3.1.6}
\sup_{0\leq n\leq N}\Prob\big(\tilde{v}_{t_{n}}\leq0\big) \leq CN^{-\nu+\bar{\nu}+1}.
\end{equation}
\end{proposition}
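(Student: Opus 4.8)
The plan is to estimate $\Prob(\tilde{v}_{t_{n}}\leq 0)$ by conditioning on the value of the approximation one step earlier and analysing the Gaussian transition. Fix a grid point $t_{n+1}$. Given $\mathcal{F}_{t_n}$, the auxiliary process evolves by $\tilde{v}_{t_{n+1}} = \tilde{v}_{t_n} + k(\theta - \tilde{v}_{t_n}^{+})\delta t + \xi\sqrt{\tilde{v}_{t_n}^{+}}\,\delta W^{v}_{t_n}$, so conditionally on $\tilde{v}_{t_n}$ it is Gaussian with mean $m_n = \tilde{v}_{t_n} + k(\theta - \tilde{v}_{t_n}^{+})\delta t$ and variance $\sigma_n^2 = \xi^2\tilde{v}_{t_n}^{+}\delta t$. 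The crossing probability $\Prob(\tilde{v}_{t_{n+1}}\leq 0\mid \mathcal{F}_{t_n})$ is therefore a Gaussian tail that is tiny unless $\tilde{v}_{t_n}$ is already close to zero. The first step is to make this precise: using the standard bound $\Prob(Z \le -a/s) \le \tfrac12 e^{-a^2/(2s^2)}$ for a mean-$a$, variance-$s^2$ Gaussian together with the explicit $m_n,\sigma_n^2$ above, one shows that on the event $\{\tilde{v}_{t_n} > x\delta t\}$ (for a suitable threshold parameter $x$) the conditional crossing probability is bounded by roughly $\tfrac12 \exp(-\tfrac{1}{\xi^2\delta t}\cdot(\text{something}))$, which decays faster than any polynomial in $N$. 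Hence the dominant contribution to $\Prob(\tilde{v}_{t_{n+1}}\leq 0)$ comes from the event that $\tilde{v}_{t_n}$ is itself small, of order $\delta t$.

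The second step is to control $\Prob(0 < \tilde{v}_{t_n} \le x\delta t)$, i.e. the probability that the previous step lands in a thin slab near the origin. Here I would again condition one step further back and use the Gaussian density of $\tilde{v}_{t_n}$ given $\mathcal{F}_{t_{n-1}}$: the probability of landing in an interval of length $x\delta t$ is at most $(x\delta t)/(\sqrt{2\pi}\,\sigma_{n-1}) = (x\delta t)/(\sqrt{2\pi}\,\xi\sqrt{\tilde{v}_{t_{n-1}}^{+}\delta t}) = x\sqrt{\delta t}/(\sqrt{2\pi}\,\xi)\cdot (\tilde{v}_{t_{n-1}}^{+})^{-1/2}$. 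Taking expectations, this introduces a negative moment $\E[(\tilde{v}_{t_{n-1}}^{+})^{-1/2}]$ of the scheme, which must be shown to be controlled — this is where the Feller condition $\nu>2$ enters, since it is what makes the CIR process (and, one expects, a sufficiently close approximation) have a finite $-1/2$ moment; one would compare $\tilde v$ with $v$ via the moment bounds of Lemmas~\ref{Lem3.1} and~\ref{Lem3.3}, or bound the negative moment of the scheme directly. Each such step gains a factor $\sqrt{\delta t}\sim N^{-1/2}$, so iterating the argument down the grid — crossing to zero at step $n+1$ requires being near zero at step $n$, which requires being somewhat near zero at step $n-1$, and so on — and summing the geometric-type series produces the claimed power $N^{-\nu+\bar\nu+1}$. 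The precise exponent, and the appearance of $\bar\nu$ and the constant $1.99\sqrt{\nu\pi}e^{\nu/2}$ in its definition, should come from optimising the threshold $x$ in the trade-off between the super-polynomially small ``far from zero'' term and the polynomially small ``near zero'' term, balancing the exponential decay rate against the accumulated negative-moment and density constants.

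The main obstacle I anticipate is the bookkeeping in the iteration: one cannot simply say ``near zero at step $n$ implies near zero at step $n-1$'' because the drift $k\theta\delta t$ pushes the process up and the Gaussian fluctuation has standard deviation $\sim\sqrt{\delta t}$, so the relevant ``small'' scale genuinely is $\delta t$ (not $\sqrt{\delta t}$), and one has to track how the conditional densities and the negative moments compound over a number of steps that may itself grow with $N$. Getting a clean bound requires a careful choice of the event decomposition — likely a single conditioning on $\tilde v_{t_{n-1}}$ or $\tilde v_{t_n}$ combined with the crude super-polynomial estimate for everything else, rather than a deep recursion — and then extracting the sharp constant $1.99\sqrt{\nu\pi}e^{\nu/2}$ from the Gaussian tail/density estimates. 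The uniformity in $n$ (the $\sup_{0\le n\le N}$) follows once the per-step bound is uniform, since the moment bounds of Lemma~\ref{Lem3.3} are uniform in $N$ and in $t\in[0,T]$.
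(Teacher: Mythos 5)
Your first step --- conditioning on $\mathcal{F}_{t_n}$ and using the Gaussian lower-tail bound to show that the one-step crossing probability is super-polynomially small unless $\tilde v_{t_n}$ is already small --- is exactly how the paper begins (its inequality \eqref{eq3.1.30}). The gap is in your second step, the iteration. You propose to bound $\Prob(0<\tilde v_{t_n}\le x\delta t)$ by one further conditioning, gaining a factor $\sqrt{\delta t}$ via the conditional Gaussian density at the cost of a negative moment $\E[(\tilde v_{t_{n-1}}^{+})^{-1/2}]$, and then to ``iterate and sum a geometric-type series'' to reach the exponent $-\nu+\bar\nu+1$. This does not close: after the first conditioning the quantity you must control is no longer the probability of landing in an $O(\delta t)$-slab, and a second application of the same density bound only helps on the event that $\tilde v_{t_{n-1}}$ is itself small --- an event whose probability is not $o(1)$ and is essentially the quantity you are trying to estimate. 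To extract a high polynomial power $(\delta t)^{\gamma}$ directly you would need uniform negative moments $\E[(\tilde v^{+})^{-\gamma}]$ of the \emph{scheme} up to $\gamma\approx\nu-1$, which neither Lemma~\ref{Lem3.1} nor Lemma~\ref{Lem3.3} provides and which is itself a hard open step for explicit Euler schemes. (Relatedly, the condition $\nu>2$ does not enter through finiteness of the $-1/2$ moment, which only needs $\nu>1/2$; it enters through the convergence of a Hurwitz zeta series and through making the final exponent negative.)

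The paper's mechanism is genuinely different and is the key idea you are missing: instead of probabilities of thin slabs, it iterates the exponential functionals $\E[\exp\{-a_j\,\tilde v_{t_{N-j}}\}]$ (restricted to the positive part) backward through the \emph{entire} grid, down to the deterministic starting value $v_0>0$. The sequence $(a_j)$ in \eqref{eq3.1.16} is constructed so that its recursion $a_{j+1}=2\alpha_N a_j-\tfrac12 a_j^2\xi^2\delta t$ matches exactly the Laplace transform of the conditional Gaussian step, so each conditioning produces a clean factor $\exp\{-a_jk\theta\delta t\}$ plus a lower-order boundary term; the induction \eqref{eq3.1.9} shows $a_{n}\sim c/(n\,\delta t)$, so after unwinding all $N$ steps one is left with $\exp\{-a_{n+1}v_0\}\lesssim (n/N)^{\nu\varphi_\nu}$, and summing over $n$ against the zeta-type weights yields $N^{-\nu+\bar\nu+1}$. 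The polynomial rate thus comes from the full backward recursion to time zero, not from accumulating $\sqrt{\delta t}$ gains over finitely many steps, and the constant $1.99\sqrt{\nu\pi}e^{\nu/2}$ in the definition of $\bar\nu$ arises from bounding the resulting Hurwitz zeta sum, not from optimising a threshold $x$ in a two-event decomposition.
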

\begin{proof}
We first show that $\bar{\nu}$ exists and derive bounds which imply that the exponent in \eqref{eq3.1.6} is negative. Note that as the value of $\nu$ increases, the left-hand side term of the inequality in \eqref{eq3.1.5} decreases and the right-hand side term increases. Hence, $\bar{\nu}$ decreases as $\nu$ increases. In particular, $0<\bar{\nu}<\bar{\nu}|_{\nu=2}\approx0.176$.

Suppose that $N_{0}\geq\floor{kT}$ and fix $N>N_{0}$. Define, for brevity,
\begin{equation}\label{eq3.1.7}
\alpha_{N} = \frac{1}{2}\bigg(1-\frac{kT}{N}\bigg) = \frac{1-k\delta t}{2}\hspace{1pt}.
\end{equation}

First, consider the sequence $(c_{j})_{0\leq j\leq N}$ given by
\begin{equation}\label{eq3.1.8}
c_{0} = \alpha_{N},\hspace{1em} c_{1} = \alpha_{N}-\alpha_{N}^{2} \hspace{1em}\text{ and }\hspace{1em} c_{j+1} = c_{j}^{2} + \alpha_{N} - \alpha_{N}^{2},\hspace{1em} \forall\hspace{.5pt} 1\leq j\leq N-1.
\end{equation}
As $\alpha_{N}\in(0,0.5)$, one can clearly see that $c_{j}\in(0,\alpha_{N})$ for all $1\leq j\leq N$. We will show by induction that
\begin{equation}\label{eq3.1.9}
c_{j} \leq 1 - \alpha_{N} - \frac{\varphi_{\nu}}{j-1+\eta_{\nu}}\hspace{1pt},\hspace{1em} \forall\hspace{.5pt} 1\leq j\leq N,
\end{equation}
where
\begin{equation}\label{eq3.1.10}
\varphi_{\nu}=1-\frac{\bar{\nu}}{\nu} \hspace{1em}\text{ and }\hspace{1em} \eta_{\nu}=\frac{(\nu-\bar{\nu})(4\bar{\nu}\vee\nu)}{\nu\bar{\nu}}\hspace{1pt}.
\end{equation}
Since
\begin{equation}\label{eq3.1.11}
\frac{\varphi_{\nu}}{\eta_{\nu}} = \frac{\bar{\nu}}{4\bar{\nu}\vee\nu} \leq \frac{1}{4} \leq (1-\alpha_{N})^{2},
\end{equation}
\eqref{eq3.1.9} holds when $j=1$. Suppose that \eqref{eq3.1.9} holds for some $1\leq j\leq N-1$, then
\begin{equation}\label{eq3.1.12}
c_{j+1} \leq \alpha_{N}-\alpha_{N}^{2} +\bigg(1 - \alpha_{N} - \frac{\varphi_{\nu}}{j-1+\eta_{\nu}}\bigg)^{2}
\end{equation}
and some simple computations lead to the following sufficient condition for the induction step,
\begin{equation}\label{eq3.1.13}
(j-1+\eta_{\nu})^{2}(1-2\alpha_{N}) + (j-1+\eta_{\nu})(1-2\alpha_{N}) + (j-1)(1-\varphi_{\nu}) + \eta_{\nu}(1-\varphi_{\nu}) - \varphi_{\nu} \geq 0,
\end{equation}
which clearly holds. For convenience, define another sequence $(a_{j})_{0\leq j\leq N}$ given by
\begin{equation}\label{eq3.1.15}
a_{j} = \frac{2(\alpha_{N}-c_{j})}{\xi^{2}\delta t}\hspace{1pt},\hspace{1em} \forall\hspace{1pt} 0\leq j\leq N,
\end{equation}
such that
\begin{equation}\label{eq3.1.16}
a_{0} = 0,\hspace{1em} a_{1} = \frac{2\alpha_{N}^{2}}{\xi^{2}\delta t} \hspace{1em}\text{ and }\hspace{1em} a_{j+1} = 2\alpha_{N}a_{j} - \frac{1}{2}\hspace{1pt}a_{j}^{2}\xi^{2}\delta t,\hspace{1em} \forall\hspace{.5pt} 1\leq j\leq N-1.
\end{equation}
A sequence similar to $(a_{j})_{0\leq j\leq N}$ was analyzed in \cite{Bossy:2007}. However, a sharper lower bound than the one obtained in Lemma 3.6 in \cite{Bossy:2007} is needed for our purposes. We now show that, for $N$ large enough,
\begin{equation}\label{eq3.1.17}
\mathcal{S}_{1} = \sum_{n=0}^{N-2}{\prod_{j=0}^{n}{\exp\left\{-a_{j}k\theta\delta t \right\}}} \leq 2\sqrt{\nu(1-k\delta t)\pi}\hspace{1pt}e^{\frac{\nu}{2}(1-k\delta t)},
\end{equation}
a bound which will be of use later in the proof. Using \eqref{eq3.1.9}, we get
\begin{align}\label{eq3.1.18}
\mathcal{S}_{1} &= \sum_{n=0}^{N-2}{\exp\Bigg\{\nu\sum_{j=1}^{n}{(c_{j}-\alpha_{N})}\Bigg\}} \nonumber\\[2pt]
&\leq \sum_{n=0}^{N-2}{\exp\Bigg\{\nu\sum_{j=1}^{n}{\bigg(1 - 2\alpha_{N} - \frac{\varphi_{\nu}}{j-1+\eta_{\nu}}\bigg)}\Bigg\}} \nonumber\\[2pt]
&\leq \sum_{n=0}^{N-2}{\exp\bigg\{\nu kT\frac{n}{N}-\nu\varphi_{\nu}\int_{0}^{n}{(x+\eta_{\nu})^{-1}\hspace{1pt}dx}\bigg\}} \nonumber\\[2pt]
&\leq \eta_{\nu}^{\nu\varphi_{\nu}}\sum_{n=0}^{N-1}{e^{\nu kT\frac{n}{N}}(\eta_{\nu}+n)^{-\nu\varphi_{\nu}}}.
\end{align}
Let $\epsilon=0.002$ for the rest of the proof. Since $\nu\varphi_{\nu}=\nu-\bar{\nu}>1$, the Hurwitz (generalized Riemann) zeta function
\begin{equation}\label{eq3.1.19}
\zeta(\nu\varphi_{\nu},\eta_{\nu}) = \sum_{n=0}^{\infty}{(\eta_{\nu}+n)^{-\nu\varphi_{\nu}}}
\end{equation}
converges and hence, for $N$ large enough, we have
\begin{equation}\label{eq3.1.20}
\sum_{n=1}^{N-1}{\big(e^{\nu kT\frac{n}{N}}-1-\epsilon\big)(\eta_{\nu}+n)^{-\nu\varphi_{\nu}}} \leq
\sum_{n>\frac{\log(1+\epsilon)}{\nu kT}N_{0}}{\big(e^{\nu kT}-1\big)(\eta_{\nu}+n)^{-\nu\varphi_{\nu}}} \leq
\epsilon\eta_{\nu}^{-\nu\varphi_{\nu}},
\end{equation}
which implies that
\begin{equation}\label{eq3.1.21}
\sum_{n=0}^{N-1}{e^{\nu kT\frac{n}{N}}(\eta_{\nu}+n)^{-\nu\varphi_{\nu}}} \leq (1+\epsilon)\sum_{n=0}^{N-1}{(\eta_{\nu}+n)^{-\nu\varphi_{\nu}}}
\leq (1+\epsilon)\zeta(\nu\varphi_{\nu},\eta_{\nu}).
\end{equation}
However,
\begin{equation}\label{eq3.1.22}
\zeta(\nu\varphi_{\nu},\eta_{\nu}) = \eta_{\nu}^{-\nu\varphi_{\nu}} + \sum_{n=1}^{\infty}{(\eta_{\nu}+n)^{-\nu\varphi_{\nu}}}
\leq \eta_{\nu}^{-\nu\varphi_{\nu}} + \int_{0}^{\infty}{(\eta_{\nu}+x)^{-\nu\varphi_{\nu}}\hspace{1pt}dx},
\end{equation}
and hence
\begin{equation}\label{eq3.1.23}
\zeta(\nu\varphi_{\nu},\eta_{\nu}) \leq \eta_{\nu}^{-\nu\varphi_{\nu}} + \frac{\eta_{\nu}^{-\nu\varphi_{\nu}+1}}{\nu\varphi_{\nu}-1}\hspace{1pt}.
\end{equation}
Combining \eqref{eq3.1.18}, \eqref{eq3.1.21} and \eqref{eq3.1.23} and using \eqref{eq3.1.5} and \eqref{eq3.1.10}, we deduce that
\begin{equation}\label{eq3.1.24}
\mathcal{S}_{1} \leq 1.99(1+\epsilon)\sqrt{\nu\pi}\hspace{1pt}e^{\frac{\nu}{2}}.
\end{equation}
However, for $N$ large enough, we have
\begin{equation}\label{eq3.1.25}
\sqrt{\nu\pi}\hspace{1pt}e^{\frac{\nu}{2}} \leq (1+\epsilon)\sqrt{\nu(1-k\delta t)\pi}\hspace{1pt}e^{\frac{\nu}{2}(1-k\delta t)}
\end{equation}
and the upper bound on $\mathcal{S}_{1}$ in \eqref{eq3.1.17} follows.

Second, recall from \eqref{eq2.2.2} that, for all $0\leq j\leq N-1$,
\begin{equation}\label{eq3.1.26}
\tilde{v}_{t_{N-j}} = \tilde{v}_{t_{N-j-1}} + k\theta\delta t - k\tilde{v}_{t_{N-j-1}}^{+}\delta t + \xi\sqrt{\tilde{v}_{t_{N-j-1}}^{+}}\big(W^{v}_{t_{N-j}} - W^{v}_{t_{N-j-1}}\big),
\end{equation}
and note that
\begin{equation}\label{eq3.1.27}
\Prob\big(\tilde{v}_{t_{N-j}}\leq0\big) = \Prob\big(\tilde{v}_{t_{N-j-1}}\leq-k\theta\delta t\big) + \Prob\big(\tilde{v}_{t_{N-j}}\leq0,\tilde{v}_{t_{N-j-1}}>0\big).
\end{equation}
Let $(\mathcal{F}^{v}_{t})_{t\geq0}$ be the natural filtration generated by $W^{v}$ and consider the shorthand notations $\E_{t}\!\big[\hspace{.5pt}\cdot\hspace{.5pt}\big]=\E\big[\hspace{.5pt}\cdot\hspace{.5pt}|\hspace{1pt}\mathcal{F}^{v}_{t}\big]$ and $\Prob_{t}\!\big(\hspace{.5pt}\cdot\hspace{.5pt}\big)=\Prob\big(\hspace{.5pt}\cdot\hspace{.5pt}|\hspace{1pt}\mathcal{F}^{v}_{t}\big)$ for the conditional expectation and probability. Conditioning on $\mathcal{F}^{v}_{t_{N-j-1}}$, we get
\begin{align}\label{eq3.1.28}
\Prob\big(\tilde{v}_{t_{N-j}}\leq0,\tilde{v}_{t_{N-j-1}}>0\big) &= \E\Big[\Ind_{\tilde{v}_{t_{N-j-1}}>0}\hspace{1pt}\E_{t_{N-j-1}}\!\Big[\Ind_{\tilde{v}_{t_{N-j}}\leq0}\Big]\Big] \nonumber\\[3pt]
&= \E\bigg[\Ind_{w>0}\hspace{1pt}\Prob_{t_{N-j-1}}\!\bigg(Z\leq-\frac{k\theta\delta t+w^{+}(1-k\delta t)}{\xi\sqrt{w^{+}\delta t}}\bigg)\bigg],
\end{align}
where $w = \tilde{v}_{t_{N-j-1}}$ and $Z\sim\mathcal{N}\left(0,1\right)$ independent of $\mathcal{F}^{v}_{t_{N-j-1}}$. Using a standard inequality for the lower tail of the normal distribution, namely
\begin{equation}\label{eq3.1.29}
\Prob\big(Z\leq-x\big) \leq \frac{1}{\sqrt{2\pi}\hspace{1pt}x}\hspace{1pt}e^{-\frac{1}{2}x^{2}},\hspace{1em} \forall x>0,
\end{equation}
and the arithmetic mean-geometric mean (AM-GM) inequality, we deduce that
\begin{align}\label{eq3.1.30}
\Prob\big(\tilde{v}_{t_{N-j}}\leq0,\tilde{v}_{t_{N-j-1}}>0\big) &\leq
\frac{1}{2\sqrt{\nu(1-k\delta t)\pi}}\hspace{1pt}\E\bigg[\Ind_{w>0}\hspace{1pt}\exp\bigg\{-\frac{(k\theta\delta t+w^{+}(1-k\delta t))^{2}}{2\xi^{2}w^{+}\delta t}\bigg\}\bigg] \nonumber\\[2pt]
&\leq \frac{e^{-\frac{\nu}{2}(1-k\delta t)}}{2\sqrt{\nu(1-k\delta t)\pi}}\hspace{1pt}\E\Big[\exp\left\{-a_{1}\abs(\tilde{v}_{t_{N-j-1}})\right\}\Big],
\end{align}
where $\abs(w)=w$ if $w>0$ and $\abs(w)=\infty$ otherwise. Let $1\leq i,j\leq N-1$, then
\begin{align}\label{eq3.1.31}
\E\Big[\exp\left\{-a_{i}\abs(\tilde{v}_{t_{N-j}})\right\}\Big] &= \E\Big[\exp\left\{-a_{i}\abs(\tilde{v}_{t_{N-j}})\right\}\left(\Ind_{\tilde{v}_{t_{N-j-1}}>0}+\Ind_{\tilde{v}_{t_{N-j-1}}\leq0}\right)\Big] \nonumber\\[2pt]
&\leq \E\Big[\Ind_{\tilde{v}_{t_{N-j-1}}>0}\hspace{1pt}\E_{t_{N-j-1}}\!\Big[\exp\left\{-a_{i}\abs(\tilde{v}_{t_{N-j}})\right\}\Big]\Big] \nonumber\\[3pt]
&+ \Prob\big(-k\theta\delta t<\tilde{v}_{t_{N-j-1}}\leq0\big).
\end{align}
Denote $w = \tilde{v}_{t_{N-j-1}}$ as before and let $\mathcal{I}$ be the inner expectation on the right-hand side of \eqref{eq3.1.31}, i.e.,
\begin{equation}\label{eq3.1.32}
\mathcal{I} = \E_{t_{N-j-1}}\!\Big[\exp\left\{-a_{i}\abs\!\big(w + k\theta\delta t - kw^{+}\delta t + \xi\sqrt{w^{+}\delta t}\hspace{1pt}Z\big)\right\}\Big],
\end{equation}
where $Z\sim\mathcal{N}\left(0,1\right)$ independent of $\mathcal{F}^{v}_{t_{N-j-1}}$. There are two possible outcomes, namely $w\leq0$, in which case $\mathcal{I}\leq1$, and $w>0$, which is treated now:
\begin{align}\label{eq3.1.33}
\mathcal{I} &= \int_{z_{0}}^{\infty}{\frac{1}{\sqrt{2\pi}}\exp\left\{-\frac{1}{2}z^{2} -a_{i}\big(k\theta\delta t + w(1-k\delta t) + \xi\sqrt{w\delta t}\hspace{1pt}z\big)\right\}dz} \nonumber\\[2pt]
&= \exp\left\{-a_{i}k\theta\delta t - w a_{i}(1-k\delta t) + \frac{1}{2}\hspace{1pt}w a_{i}^{2}\xi^{2}\delta t\right\}\Phi\left(-z_{0}-a_{i}\xi\sqrt{w\delta t}\hspace{1pt}\right),
\end{align}
where
\begin{equation}\label{eq3.1.34}
z_{0} = -\frac{k\theta\delta t + w(1-k\delta t)}{\xi\sqrt{w\delta t}}
\end{equation}
and $\Phi$ is the standard normal CDF. We deduce from \eqref{eq3.1.16} that
\begin{equation}\label{eq3.1.35}
\mathcal{I} \leq \exp\left\{-a_{i}k\theta\delta t - w a_{i+1}\right\},
\end{equation}
and hence that
\begin{align}\label{eq3.1.36}
\E\Big[\exp\left\{-a_{i}\abs(\tilde{v}_{t_{N-j}})\right\}\Big] &\leq \exp\left\{-a_{i}k\theta\delta t\right\}\E\Big[\exp\left\{-a_{i+1}\abs(\tilde{v}_{t_{N-j-1}})\right\}\Big] \nonumber\\[2pt]
&+ \Prob\big(-k\theta\delta t<\tilde{v}_{t_{N-j-1}}\leq0\big).
\end{align}
For $N$ large enough, putting together \eqref{eq3.1.17}, \eqref{eq3.1.27}, \eqref{eq3.1.30} and \eqref{eq3.1.36}, we can prove by induction that, for all $0\leq l\leq N-1$,
\begin{align}\label{eq3.1.37}
\sup_{0\leq n\leq l}\Prob\big(\tilde{v}_{t_{N-n}}\leq0\big) &\leq \frac{e^{-\frac{\nu}{2}(1-k\delta t)}}{2\sqrt{\nu(1-k\delta t)\pi}}\sum_{n=0}^{l}{\E\Big[\exp\left\{-a_{n+1}\abs(\tilde{v}_{t_{N-l-1}})\right\}\Big]\prod_{j=0}^{n}{\exp\left\{-a_{j}k\theta\delta t\right\}}} \nonumber\\[2pt]
&+ \Prob\big(\tilde{v}_{t_{N-l-1}}\leq0\big).
\end{align}
Taking $l=N-1$ in \eqref{eq3.1.37} and since $\tilde{v}_{t_{0}}=v_{0}>0$, we obtain
\begin{equation}\label{eq3.1.38}
\sup_{0\leq n\leq N}\Prob\big(\tilde{v}_{t_{n}}\leq0\big) \leq \frac{e^{-\frac{\nu}{2}(1-k\delta t)}}{2\sqrt{\nu(1-k\delta t)\pi}}\sum_{n=0}^{N-1}{\exp\left\{-a_{n+1}v_{0}\right\}\prod_{j=0}^{n}{\exp\left\{-a_{j}k\theta\delta t\right\}}}.
\end{equation}
Using \eqref{eq3.1.9} yields
\begin{align}\label{eq3.1.39}
\mathcal{S}_{2} &= \sum_{n=0}^{N-1}{\exp\left\{-a_{n+1}v_{0}\right\}\prod_{j=0}^{n}{\exp\left\{-a_{j}k\theta\delta t\right\}}} \nonumber\\[2pt]
&\leq \sum_{n=0}^{N-1}{\exp\Bigg\{\nu\sum_{j=1}^{n}{\bigg(1 - 2\alpha_{N} - \frac{\varphi_{\nu}}{j-1+\eta_{\nu}}\bigg)}+\frac{2v_{0}}{\xi^{2}\delta t}\bigg(1 - 2\alpha_{N} - \frac{\varphi_{\nu}}{n+\eta_{\nu}}\bigg)\Bigg\}} \nonumber\\[2pt]
&\leq \eta_{\nu}^{\nu\varphi_{\nu}}\exp\left\{\nu\bigg(\frac{v_{0}}{\theta}+kT\bigg)\right\}\sum_{n=0}^{N-1}{(\eta_{\nu}+n)^{-\nu\varphi_{\nu}}}\exp\left\{-\frac{2v_{0}\varphi_{\nu}N}{\xi^{2}T(n+\eta_{\nu})}\right\}.
\end{align}
However, for all $x,y>0$, we have that $\log(y/x)\geq1-x/y$ such that $e^{-x}\leq e^{-y}(y/x)^{y}$, and hence
\begin{align}\label{eq3.1.40}
\mathcal{S}_{2} &\leq \eta_{\nu}^{\nu\varphi_{\nu}}\exp\left\{\nu\bigg(\frac{v_{0}}{\theta}+kT\bigg)\right\}\sum_{n=0}^{N-1}{e^{-\nu\varphi_{\nu}}\left(\frac{k\theta T}{v_{0}}\right)^{\nu\varphi_{\nu}}N^{-\nu\varphi_{\nu}}} \nonumber\\[2pt]
&= \left(\frac{k\theta T\eta_{\nu}}{v_{0}}\right)^{\nu\varphi_{\nu}}\exp\left\{\nu\bigg(\frac{v_{0}}{\theta}+kT-\varphi_{\nu}\bigg)\right\}N^{-\nu+\bar{\nu}+1}.
\end{align}
Combining \eqref{eq3.1.25}, \eqref{eq3.1.38} and \eqref{eq3.1.40}, we conclude that there exists $N_{0}\in\mathbb{N}$ such that
\begin{equation}\label{eq3.1.41}
\sup_{0\leq n\leq N}\Prob\big(\tilde{v}_{t_{n}}\leq0\big) \leq \frac{(1+\epsilon)e^{-\frac{\nu}{2}}}{2\sqrt{\nu\pi}}\hspace{1pt}\left(\frac{k\theta T\eta_{\nu}}{v_{0}}\right)^{\nu\varphi_{\nu}}\exp\left\{\nu\bigg(\frac{v_{0}}{\theta}+kT-\varphi_{\nu}\bigg)\right\}N^{-\nu+\bar{\nu}+1}
\end{equation}
for all $N>N_{0}$.
\end{proof}

Next, we bound the $L^{p}$ difference between the two continuous-time approximations $\tilde{v}$ and $\bar{v}$.

\begin{proposition}\label{Prop3.6}
Suppose that $\nu>2$ and let $1\leq p<2(\nu-\bar{\nu}-1)$. Then there exist $N_{0}\in\mathbb{N}$ and a constant $C$ such that, for all $N>N_{0}$,
\begin{equation}\label{eq3.1.42}
\sup_{t\in[0,T]}\E\big[|\tilde{v}_{t}-\bar{v}_{t}|^{p}\big]^{\frac{1}{p}} \leq CN^{-\frac{1}{2}}.
\end{equation}
\end{proposition}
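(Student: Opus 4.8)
The difference $\tilde v_t - \bar v_t$ vanishes at grid points and, on each interval $[t_n, t_{n+1})$, equals $k(\theta - \tilde v_{t_n}^+)(t - t_n) + \xi\sqrt{\tilde v_{t_n}^+}(W^v_t - W^v_{t_n})$ plus the correction $\tilde v_{t_n} - \tilde v_{t_n}^+ = -\tilde v_{t_n}^-$. So I would split
\begin{equation*}
\tilde v_t - \bar v_t = \bigl(\tilde v_t - \tilde v_{t_n}\bigr) + \bigl(\tilde v_{t_n} - \tilde v_{t_n}^+\bigr),
\end{equation*}
and bound the $L^p$ norm of each piece for $t \in [t_n, t_{n+1})$, uniformly in $n$. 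The first piece is a standard one-step Euler increment: by the triangle inequality, the BDG inequality, and Lemma~\ref{Lem3.3} (which controls $\E[(\tilde v_{t_n}^+)^{p/2}]$ and hence the martingale part), one gets $\E[|\tilde v_t - \tilde v_{t_n}|^p]^{1/p} \le C(\delta t + \sqrt{\delta t}) \le C N^{-1/2}$. This part is routine.

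The genuinely delicate term is the second one, $\tilde v_{t_n} - \tilde v_{t_n}^+ = -\min(\tilde v_{t_n}, 0) = \tilde v_{t_n}^-$, which is nonzero precisely on the event $\{\tilde v_{t_n} \le 0\}$. Here I would use Hölder's inequality to separate the size of the excursion below zero from the probability of being there: for a conjugate pair $1/q + 1/q' = 1$,
\begin{equation*}
\E\bigl[(\tilde v_{t_n}^-)^p\bigr] = \E\bigl[(\tilde v_{t_n}^-)^p \Ind_{\tilde v_{t_n} \le 0}\bigr] \le \E\bigl[|\tilde v_{t_n}|^{pq}\bigr]^{1/q}\, \Prob\bigl(\tilde v_{t_n} \le 0\bigr)^{1/q'}.
\end{equation*}
The first factor is bounded uniformly in $n$ and $N$ by Lemma~\ref{Lem3.3} (for any finite moment), and the second factor is controlled by Proposition~\ref{Prop3.5}: $\Prob(\tilde v_{t_n} \le 0) \le C N^{-\nu + \bar\nu + 1}$. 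Thus $\E[(\tilde v_{t_n}^-)^p]^{1/p} \le C N^{(-\nu + \bar\nu + 1)/(p q')}$, and I need this exponent to be at most $-1/2$, i.e. $(\nu - \bar\nu - 1)/(p q') \ge 1/2$. Since $q'$ can be taken arbitrarily close to $1$ (pushing all the cost onto the moment factor, which is finite for every power by Lemma~\ref{Lem3.3}), the requirement becomes $\nu - \bar\nu - 1 \ge p/2$, i.e. $p < 2(\nu - \bar\nu - 1)$ — exactly the hypothesis of the proposition. Choosing $q'$ close enough to $1$ depending on the gap in this strict inequality makes it work.

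Combining the two bounds via $\E[|\tilde v_t - \bar v_t|^p]^{1/p} \le \E[|\tilde v_t - \tilde v_{t_n}|^p]^{1/p} + \E[(\tilde v_{t_n}^-)^p]^{1/p} \le C N^{-1/2}$, and taking the supremum over $t \in [0,T]$ (equivalently over $n$), yields \eqref{eq3.1.42}. The main obstacle is really Proposition~\ref{Prop3.5}, already established; given it, the only subtlety here is the bookkeeping in the Hölder exponents to convert the polynomial decay of $\Prob(\tilde v_{t_n} \le 0)$ into the order-$1/2$ rate for the admissible range of $p$, and checking that the constant and threshold $N_0$ can be chosen uniformly.
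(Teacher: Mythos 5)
Your proposal is correct and follows essentially the same route as the paper: the same decomposition $\tilde v_t-\bar v_t=(\tilde v_t-\tilde v_{\bar t})+(\tilde v_{\bar t}-\tilde v_{\bar t}^{+})$, a direct moment bound on the one-step increment via Lemma~\ref{Lem3.3}, and H\"older's inequality combining the uniform moment bounds with Proposition~\ref{Prop3.5} to handle $\tilde v_{\bar t}^{-}=|\tilde v_{\bar t}|\Ind_{\tilde v_{\bar t}\le 0}$. The paper's explicit choice $\epsilon=2(\nu-\bar\nu-1)-p$ corresponds exactly to your choice of the conjugate exponent $q'=(p+\epsilon)/p$ close to $1$.
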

\begin{proof}
Let $\epsilon=2(\nu-\bar{\nu}-1)-p>0$. For convenience of notation, define
\begin{equation}\label{eq3.1.43}
\Delta\tilde{v}_{t} = \tilde{v}_{t}-\tilde{v}_{\bar{t}},
\end{equation}
where $\bar{t}=\delta t\floor*{\frac{t}{\delta t}}$ for all $t\in[0,T]$. From the triangle inequality, we have
\begin{equation}\label{eq3.1.44}
|\tilde{v}_{t}-\bar{v}_{t}| \leq |\tilde{v}_{\bar{t}}-\tilde{v}_{\bar{t}}^{+}| + |\Delta\tilde{v}_{t}| = |\tilde{v}_{\bar{t}}|\Ind_{\tilde{v}_{\bar{t}}\leq0} + |\Delta\tilde{v}_{t}|
\end{equation}
and hence, using H\"older's inequality, we get
\begin{align}\label{eq3.1.45}
\sup_{t\in[0,T]}\E\big[|\tilde{v}_{t}-\bar{v}_{t}|^{p}\big] &\leq 2^{p-1}\sup_{N\geq1}\hspace{1.5pt}\sup_{0\leq n\leq N}\E\Big[|\tilde{v}_{t_{n}}|^{\frac{p(p+\epsilon)}{\epsilon}}\Big]^{\frac{\epsilon}{p+\epsilon}}\sup_{0\leq n\leq N}\Prob\big(\tilde{v}_{t_{n}}\leq0\big)^{\frac{p}{p+\epsilon}} \nonumber\\[2pt]
&+ 2^{p-1}\sup_{t\in[0,T]}\E\big[|\Delta\tilde{v}_{t}|^{p}\big].
\end{align}
We can bound the last term on the right-hand side from above as follows,
\begin{align}\label{eq3.1.46}
|\Delta\tilde{v}_{t}|^{p} &\leq
\Big(k\theta\delta t + k\bar{v}_{t}\delta t + \xi\sqrt{\bar{v}_{t}}\hspace{1pt}\big|W^{v}_{t}-W^{v}_{\bar{t}}\big|\Big)^{p} \nonumber\\[3pt]
&\leq 3^{p-1}k^{p}\theta^{p}(\delta t)^{p} + 3^{p-1}k^{p}|\tilde{v}_{\bar{t}}|^{p}(\delta t)^{p} + 3^{p-1}\xi^{p}|\tilde{v}_{\bar{t}}|^{\frac{p}{2}}\big|W^{v}_{t}-W^{v}_{\bar{t}}\big|^{p},
\end{align}
and hence
\begin{align}\label{eq3.1.47}
\sup_{t\in[0,T]}\E\big[|\Delta\tilde{v}_{t}|^{p}\big] &\leq 3^{p-1}(k\theta T)^{p}N^{-p} + 3^{p-1}(kT)^{p}\sup_{N\geq1}\hspace{1.5pt}\sup_{0\leq n\leq N}\E\big[|\tilde{v}_{t_{n}}|^{p}\big]N^{-p} \nonumber\\[2pt]
&+ 3^{p-1}\xi^{p}T^{\frac{p}{2}}\E\big[|Z|^{p}\big]\sup_{N\geq1}\hspace{1.5pt}\sup_{0\leq n\leq N}\E\Big[|\tilde{v}_{t_{n}}|^{\frac{p}{2}}\Big]N^{-\frac{p}{2}},
\end{align}
where $Z\sim\mathcal{N}\left(0,1\right)$. Substituting back into \eqref{eq3.1.45} with \eqref{eq3.1.47} and using Lemma \ref{Lem3.3} and Proposition \ref{Prop3.5} concludes the proof.
\end{proof}

Before we prove the main result of this paper, we need the following technical auxiliary result.

\begin{lemma}\label{Lem3.7}
Suppose that $2\leq q<\nu-1$. Then we can find $\beta>0$ such that
\begin{equation}\label{eq3.1.48}
\nu > 2\beta+1 > \nu+2q-\sqrt{(\nu+2q-1)^{2}-4q(q-1)}
\end{equation}
and
\begin{equation}\label{eq3.1.49}
2(\nu-\bar{\nu}-1)(\nu-\beta-1) > \nu q.
\end{equation}
\end{lemma}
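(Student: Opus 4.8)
The plan is to treat \eqref{eq3.1.48} and \eqref{eq3.1.49} as two explicit constraints on $\beta$ and show they are simultaneously satisfiable on a nonempty open interval. First I would rewrite \eqref{eq3.1.48} as a constraint on $2\beta+1$: it must lie strictly between $\nu+2q-\sqrt{(\nu+2q-1)^{2}-4q(q-1)}$ and $\nu$. So the first task is to check that the lower endpoint is genuinely below $\nu$, i.e.\ that
\begin{equation*}
\nu+2q-\sqrt{(\nu+2q-1)^{2}-4q(q-1)} < \nu,
\end{equation*}
equivalently $2q < \sqrt{(\nu+2q-1)^{2}-4q(q-1)}$. Squaring (both sides positive since $q\geq2$), this reduces to $4q^{2} < (\nu+2q-1)^{2}-4q(q-1)$, i.e.\ $0 < (\nu+2q-1)^{2}-4q(q-1)-4q^{2} = (\nu+2q-1)^{2}-8q^{2}+4q = (\nu-1)^{2}+4q(\nu-1)+4q$, which is manifestly positive for $\nu>1$. (I should also note the discriminant $(\nu+2q-1)^{2}-4q(q-1)$ is positive so the square root is real; this is clear since $(\nu+2q-1)^2 \ge (2q-1)^2 > 4q(q-1)$ when $\nu \ge 1$.) Hence \eqref{eq3.1.48} alone is satisfied precisely by choosing $2\beta+1$ in a nonempty open interval with right endpoint $\nu$; equivalently, by taking $\beta$ slightly below $(\nu-1)/2$.

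Next I would handle \eqref{eq3.1.49}. Since $\bar\nu$ depends only on $\nu$ (and satisfies $0<\bar\nu<0.176$ by Proposition \ref{Prop3.5}), the quantity $2(\nu-\bar\nu-1)$ is a fixed positive constant, and \eqref{eq3.1.49} is a strict lower bound on $\nu-\beta-1$, i.e.\ an \emph{upper} bound on $\beta$:
\begin{equation*}
\beta < \nu - 1 - \frac{\nu q}{2(\nu-\bar\nu-1)}.
\end{equation*}
The strategy is therefore: both \eqref{eq3.1.48} and \eqref{eq3.1.49} impose upper bounds on $\beta$ (plus the harmless lower bound from the left inequality of \eqref{eq3.1.48}), so a valid $\beta$ exists iff the larger of the two relevant lower limits is strictly below the smaller of the two upper limits. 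I would argue that it suffices to check the single value $\beta=(\nu-1)/2 - \epsilon$ works for small $\epsilon>0$, or more cleanly, to verify \eqref{eq3.1.49} in the limiting case $2\beta+1 \to \nu$, i.e.\ $\beta\to(\nu-1)/2$: then $\nu-\beta-1\to(\nu-1)/2$, and \eqref{eq3.1.49} becomes
\begin{equation*}
(\nu-\bar\nu-1)(\nu-1) > \nu q.
\end{equation*}
If this holds strictly, then by continuity \eqref{eq3.1.49} persists for $\beta$ slightly below $(\nu-1)/2$, while \eqref{eq3.1.48} also holds there, and we are done.

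So the crux reduces to showing $(\nu-\bar\nu-1)(\nu-1) > \nu q$ under the hypothesis $q<\nu-1$. The main obstacle — and the only place the specific bound $\bar\nu<0.176$ enters — is that this is \emph{not} automatic: it needs $q < (\nu-\bar\nu-1)(\nu-1)/\nu = (\nu-1) - \bar\nu(\nu-1)/\nu$, which is slightly stronger than $q<\nu-1$. I would resolve this by not insisting $2\beta+1$ be arbitrarily close to $\nu$. Instead, observe that \eqref{eq3.1.48} only requires $2\beta+1 > \nu+2q-\sqrt{(\nu+2q-1)^{2}-4q(q-1)} =: L$, and one computes that $L < \nu - 1$ exactly when... — here I would do the short algebraic check that the lower endpoint $L$ satisfies $L>1$ (so that $\beta>0$ is forced correctly) and then pick $\beta$ so that $2\beta+1$ sits just above $L$ rather than near $\nu$, giving $\nu-\beta-1$ close to $\nu - (L+1)/2$, which is \emph{larger} than $(\nu-1)/2$. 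The inequality \eqref{eq3.1.49} to verify then becomes $2(\nu-\bar\nu-1)\bigl(\nu - \tfrac{L+1}{2}\bigr) \ge \nu q$, i.e.\ $(\nu-\bar\nu-1)(2\nu-L-1) > \nu q$; using the identity $2\nu - L - 1 = \nu - 2q - 1 + \sqrt{(\nu+2q-1)^2 - 4q(q-1)}$ and $\sqrt{(\nu+2q-1)^2-4q(q-1)} \ge \nu+2q-1 - \tfrac{2q(q-1)}{\nu+2q-1}$, one gets $2\nu-L-1 \ge 2(\nu-1) - \tfrac{2q(q-1)}{\nu+2q-1} \ge 2(\nu-1) - \tfrac{2q(q-1)}{3q-2}$ (using $\nu>q+1$), and since $q<\nu-1$ this lower bound exceeds $2q \cdot \tfrac{\nu}{2(\nu-\bar\nu-1)}$ after absorbing the small $\bar\nu$-correction via $\bar\nu < 0.176 < 1$. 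I expect the cleanest writeup is to fix $\beta$ explicitly as a convex combination of $L/2$ and $(\nu-1)/2$ (say with weight depending on how much slack $\nu-1-q$ provides) and verify both numbered inequalities directly; the bookkeeping around the $\bar\nu$-term is the only delicate part, and it goes through because $\bar\nu$ is uniformly tiny.
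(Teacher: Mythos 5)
Your overall strategy coincides with the paper's: both arguments reduce the two conditions to the single requirement that $\nu q$ be smaller than the supremum of $2(\nu-\bar{\nu}-1)(\nu-\beta-1)$ over the $\beta$-range permitted by \eqref{eq3.1.48}, namely
\begin{equation*}
\nu q < (\nu-\bar{\nu}-1)\bigl(\nu-2q-1+\sqrt{(\nu+2q-1)^{2}-4q(q-1)}\,\bigr),
\end{equation*}
and both use $2q<\sqrt{(\nu+2q-1)^{2}-4q(q-1)}$ to see that the $\beta$-interval is nonempty. A minor slip first: your expansion in that check drops a term; the correct identity is $(\nu+2q-1)^{2}-4q(q-1)-4q^{2}=(\nu-1)^{2}+4q(\nu-1)-4q(q-1)$, which is positive because $\nu-1>q$, not ``manifestly for all $\nu>1$''.

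The genuine gap is in your verification of the displayed crux inequality. The estimate $\sqrt{a^{2}-b}\geq a-\frac{b}{2a}$ you invoke (with $a=\nu+2q-1$, $b=4q(q-1)$) goes the wrong way: $\bigl(a-\frac{b}{2a}\bigr)^{2}=a^{2}-b+\frac{b^{2}}{4a^{2}}$, so $a-\frac{b}{2a}$ is an \emph{upper} bound for the square root. The valid first-order substitute $\sqrt{a^{2}-b}\geq a-\frac{b}{a}$ is too weak to close the argument: in the critical case $q=2$, $\nu\downarrow 3$ it gives $\nu-2q-1+\sqrt{\cdots}\;\geq\; 2(\nu-1)-\frac{4q(q-1)}{\nu+2q-1}\approx 2.67$, whereas the true value is $\approx 3.29$, and $(\nu-\bar{\nu}-1)\cdot 2.67\approx 5.2<6=\nu q$. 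So the needed inequality cannot be ``absorbed'' by the smallness of $\bar{\nu}$ after this linearization -- the margin is genuinely tight (the exact right-hand side exceeds $\nu q$ by a factor of only about $1.07$ at $q=2$, $\nu=3$). The paper instead rationalizes exactly, writing $\frac{\nu q}{\nu-2q-1+\sqrt{\cdots}}=\nu\,\frac{2q+1-\nu+\sqrt{\cdots}}{4(2\nu-q-1)}$, and lower-bounds the resulting function $f_{q}(x)$, $x=\nu-q-1$, by $1+\frac{q-1}{12}\geq 1.083$, comparing against $1+\bar{\nu}\leq 1.057$; note that this uses the monotonicity of $\bar{\nu}$ in $\nu$ together with $\nu>q+1\geq 3$, whereas the bound $\bar{\nu}<0.176$ you quote (valid for $\nu>2$) leaves essentially no room. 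To repair your write-up you would need either the exact rationalization or a second-order lower bound on the square root, combined with the sharper bound on $\bar{\nu}$.
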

\begin{proof}
Note that \eqref{eq3.1.48} and \eqref{eq3.1.49} are equivalent to
\begin{equation}\label{eq3.1.50}
(\nu q) \vee (\nu-1)(\nu-\bar{\nu}-1) < (\nu-\bar{\nu}-1)\big(\nu-2q-1+\sqrt{(\nu+2q-1)^{2}-4q(q-1)}\hspace{1pt}\big)
\end{equation}
since $\beta\in(0,\infty)$ spans the interval of values associated with \eqref{eq3.1.50} for
\begin{equation}\label{eq3.1.51}
2(\nu-\bar{\nu}-1)(\nu-\beta-1).
\end{equation}
Furthermore, one can easily check that
\begin{equation}\label{eq3.1.52}
2q < \sqrt{(\nu+2q-1)^{2}-4q(q-1)}\hspace{1pt},
\end{equation}
which implies that
\begin{equation}\label{eq3.1.53}
1+\bar{\nu} < \nu-\frac{\nu q}{\nu-2q-1+\sqrt{(\nu+2q-1)^{2}-4q(q-1)}}
\end{equation}
is equivalent to \eqref{eq3.1.50}. However, we can rewrite \eqref{eq3.1.53} as
\begin{equation}\label{eq3.1.54}
1+\bar{\nu} < \nu\bigg(1-\frac{2q+1-\nu+\sqrt{(\nu+2q-1)^{2}-4q(q-1)}}{4(2\nu-q-1)}\hspace{1pt}\bigg).
\end{equation}
Let $x=\nu-q-1>0$ and define the right-hand side function
\begin{align}\label{eq3.1.55}
f_{q}(x) &= (q+1+x)\bigg(1-\frac{4q-(3q+x)+\sqrt{(3q+x)^{2}-4q(q-1)}}{4(q+1+2x)}\hspace{1pt}\bigg) \nonumber\\[2pt]
&= 1 + \frac{(2q+1+2x)x}{q+1+2x} + \frac{(2q+2+2x)q(q-1)}{2(q+1+2x)(3q+x+\sqrt{(3q+x)^{2}-4q(q-1)}\hspace{1pt})}\hspace{1pt}.
\end{align}
Hence, we deduce that
\begin{align}\label{eq3.1.56}
f_{q}(x) \geq 1 + x + \frac{q(q-1)}{4(3q+x)} \geq 1 + \frac{q-1}{12} \geq 1.083 > 1.057 = 1 + \bar{\nu}|_{\nu=3} \geq 1 + \bar{\nu},
\end{align}
which concludes the proof.
\end{proof}

With these results at our disposal, we are now ready to prove the main theorem.

\begin{proof}(Theorem \ref{Thm3.8})
Let $p<q<\nu-1$ and fix any $\beta>0$ which satisfies \eqref{eq3.1.48} and \eqref{eq3.1.49}. For convenience of notation, define
\begin{equation}\label{eq3.1.58}
e^{v}_{t}=v_{t}-\tilde{v}_{t},\hspace{.75em} e^{v}_{0}=0,
\end{equation}
such that
\begin{equation}\label{eq3.1.59}
de^{v}_{t} = -k(v_{t}-\bar{v}_{t})dt + \xi\big(\sqrt{v_{t}}-\sqrt{\bar{v}_{t}}\hspace{1pt}\big)dW^{v}_{t}.
\end{equation}
Since $\nu\geq1$, the CIR process $v$ has almost surely strictly positive paths. Applying It\^o's formula to the $\mathcal{C}^{2,2}$ function $f(v_{t},e^{v}_{t})=v_{t}^{-\beta}|e^{v}_{t}|^{q}$ yields
\begin{align}\label{eq3.1.60}
v_{t}^{-\beta}|e^{v}_{t}|^{q} &= -\beta\int_{0}^{t}{v_{u}^{-(\beta+1)}|e^{v}_{u}|^{q}\,dv_{u}} + q\int_{0}^{t}{v_{u}^{-\beta}|e^{v}_{u}|^{q-1}\sgn(e^{v}_{u})\,de^{v}_{u}} + \frac{1}{2}\hspace{1pt}\beta(\beta+1)\int_{0}^{t}{v_{u}^{-(\beta+2)}|e^{v}_{u}|^{q}\,d\langle v\rangle_{u}} \nonumber\\[2pt]
&+ \frac{1}{2}\hspace{1pt}q(q-1)\int_{0}^{t}{v_{u}^{-\beta}|e^{v}_{u}|^{q-2}\,d\langle e^{v}\rangle_{u}} - \beta q\int_{0}^{t}{v_{u}^{-(\beta+1)}|e^{v}_{u}|^{q-1}\sgn(e^{v}_{u})\,d\langle v,e^{v}\rangle_{u}},
\end{align}
where $\sgn(e^{v})=1$ if $e^{v}>0$ and $\sgn(e^{v})=-1$ otherwise, and hence
\begin{align}\label{eq3.1.61}
v_{t}^{-\beta}|e^{v}_{t}|^{q} &= -\beta k\theta\int_{0}^{t}{v_{u}^{-(\beta+1)}|e^{v}_{u}|^{q}\,du} + \beta k\int_{0}^{t}{v_{u}^{-\beta}|e^{v}_{u}|^{q}\,du} - \beta\xi\int_{0}^{t}{v_{u}^{-\left(\beta+\frac{1}{2}\right)}|e^{v}_{u}|^{q}\,dW^{v}_{u}} \nonumber\\[2pt]
&- qk\int_{0}^{t}{v_{u}^{-\beta}|e^{v}_{u}|^{q-1}\sgn(e^{v}_{u})(v_{u}-\bar{v}_{u})\,du} + q\xi\int_{0}^{t}{v_{u}^{-\beta}|e^{v}_{u}|^{q-1}\sgn(e^{v}_{u})\big(\sqrt{v_{u}}-\sqrt{\bar{v}_{u}}\hspace{1pt}\big)\,dW^{v}_{u}} \nonumber\\[2pt]
&+ \frac{1}{2}\hspace{1pt}\beta(\beta+1)\xi^{2}\int_{0}^{t}{v_{u}^{-(\beta+1)}|e^{v}_{u}|^{q}\,du} + \frac{1}{2}\hspace{1pt}q(q-1)\xi^{2}\int_{0}^{t}{v_{u}^{-\beta}|e^{v}_{u}|^{q-2}\big|\sqrt{v_{u}}-\sqrt{\bar{v}_{u}}\hspace{1pt}\big|^{2}\,du} \nonumber\\[2pt]
&- \beta q\xi^{2}\int_{0}^{t}{v_{u}^{-\left(\beta+\frac{1}{2}\right)}|e^{v}_{u}|^{q-1}\sgn(e^{v}_{u})\big(\sqrt{v_{u}}-\sqrt{\bar{v}_{u}}\hspace{1pt}\big)\,du}.
\end{align}
We can show that the two stochastic integrals in \eqref{eq3.1.61} are true martingales by a simple application of H\"older's inequality and Lemmas \ref{Lem3.1}, \ref{Lem3.3} and \ref{Lem3.7}. Taking expectations on both sides, since
\begin{equation}\label{eq3.1.62}
v_{u}-\bar{v}_{u}=e^{v}_{u}+\tilde{v}_{u}-\bar{v}_{u} \hspace{1em}\text{ and }\hspace{1em} \sgn(e^{v}_{u})e^{v}_{u}=|e^{v}_{u}|,
\end{equation}
we deduce that
\begin{align}\label{eq3.1.63}
\E\Big[v_{t}^{-\beta}|e^{v}_{t}|^{q}\Big] &= (\beta-q)k\E\bigg[\int_{0}^{t}{v_{u}^{-\beta}|e^{v}_{u}|^{q}\,du}\bigg] - qk\E\bigg[\int_{0}^{t}{v_{u}^{-\beta}|e^{v}_{u}|^{q-1}\sgn(e^{v}_{u})(\tilde{v}_{u}-\bar{v}_{u})\,du}\bigg] \nonumber\\[2pt]
&- \beta q\xi^{2}\E\bigg[\int_{0}^{t}{v_{u}^{-\left(\beta+\frac{1}{2}\right)}|e^{v}_{u}|^{q-2}e^{v}_{u}\big(\sqrt{v_{u}}-\sqrt{\bar{v}_{u}}\hspace{1pt}\big)\,du}\bigg] \nonumber\\[2pt]
&+ \frac{1}{2}\hspace{1pt}q(q-1)\xi^{2}\E\bigg[\int_{0}^{t}{v_{u}^{-\beta}|e^{v}_{u}|^{q-2}\big|\sqrt{v_{u}}-\sqrt{\bar{v}_{u}}\hspace{1pt}\big|^{2}\,du}\bigg] \nonumber\\[2pt]
&- \frac{1}{2}\hspace{1pt}\beta\xi^{2}(\nu-\beta-1)\E\bigg[\int_{0}^{t}{v_{u}^{-(\beta+1)}|e^{v}_{u}|^{q}\,du}\bigg].
\end{align}
However, note that
\begin{align}\label{eq3.1.64}
\sqrt{v_{u}}\hspace{1pt}e^{v}_{u}\big(\sqrt{v_{u}}-\sqrt{\bar{v}_{u}}\hspace{1pt}\big) &= \sqrt{v_{u}}\hspace{1pt}(v_{u}-\bar{v}_{u}+\bar{v}_{u}-\tilde{v}_{u})\big(\sqrt{v_{u}}-\sqrt{\bar{v}_{u}}\hspace{1pt}\big) \nonumber\\[3pt]
&\geq v_{u}\big|\sqrt{v_{u}}-\sqrt{\bar{v}_{u}}\hspace{1pt}\big|^{2} - \sqrt{v_{u}}\hspace{1pt}\big|\sqrt{v_{u}}-\sqrt{\bar{v}_{u}}\hspace{1pt}\big||\tilde{v}_{u}-\bar{v}_{u}| \nonumber\\[3pt]
&\geq v_{u}\big|\sqrt{v_{u}}-\sqrt{\bar{v}_{u}}\hspace{1pt}\big|^{2} - |e^{v}_{u}||\tilde{v}_{u}-\bar{v}_{u}| - |\tilde{v}_{u}-\bar{v}_{u}|^{2}.
\end{align}
Substituting back into \eqref{eq3.1.63} with \eqref{eq3.1.64} leads to
\begin{align}\label{eq3.1.65}
\E\Big[v_{t}^{-\beta}|e^{v}_{t}|^{q}\Big] &\leq (\beta-q)^{+}k\E\bigg[\int_{0}^{t}{v_{u}^{-\beta}|e^{v}_{u}|^{q}\,du}\bigg] + qk\E\bigg[\int_{0}^{t}{v_{u}^{-\beta}|e^{v}_{u}|^{q-1}|\tilde{v}_{u}-\bar{v}_{u}|\,du}\bigg] \nonumber\\[2pt]
&+ \beta q\xi^{2}\E\bigg[\int_{0}^{t}{v_{u}^{-(\beta+1)}|e^{v}_{u}|^{q-1}|\tilde{v}_{u}-\bar{v}_{u}|\,du}\bigg] + \beta q\xi^{2}\E\bigg[\int_{0}^{t}{v_{u}^{-(\beta+1)}|e^{v}_{u}|^{q-2}|\tilde{v}_{u}-\bar{v}_{u}|^{2}\,du}\bigg] \nonumber\\[2pt]
&- \frac{1}{2}\hspace{1pt}q\xi^{2}(2\beta+1-q)\E\bigg[\int_{0}^{t}{v_{u}^{-\beta}|e^{v}_{u}|^{q-2}\big|\sqrt{v_{u}}-\sqrt{\bar{v}_{u}}\hspace{1pt}\big|^{2}\,du}\bigg]  \nonumber\\[2pt]
&- \frac{1}{2}\hspace{1pt}\beta\xi^{2}(\nu-\beta-1)\E\bigg[\int_{0}^{t}{v_{u}^{-(\beta+1)}|e^{v}_{u}|^{q}\,du}\bigg].
\end{align}
Let $\eta>0$. For any $a,b\geq0$ and $j\in\{1,2\}$, Young's inequality yields
\begin{equation}\label{eq3.1.66}
a^{q-j}b^{j} = \Big(\eta^{\frac{j(q-j)}{q}}a^{q-j}\Big)\Big(\eta^{-\frac{j(q-j)}{q}}b^{j}\Big) \leq \frac{q-j}{q}\hspace{1pt}\eta^{j}a^{q} + \frac{j}{q}\hspace{1pt}\eta^{j-q}b^{q}.
\end{equation}
Using \eqref{eq3.1.66} and after some rearrangements, we deduce that
\begin{align}\label{eq3.1.67}
\E\Big[v_{t}^{-\beta}|e^{v}_{t}|^{q}\Big] &\leq \big((\beta-q)^{+}+\eta(q-1)\big)k\E\bigg[\int_{0}^{t}{v_{u}^{-\beta}|e^{v}_{u}|^{q}\,du}\bigg] + \eta^{1-q}k\E\bigg[\int_{0}^{t}{v_{u}^{-\beta}|\tilde{v}_{u}-\bar{v}_{u}|^{q}\,du}\bigg] \nonumber\\[2pt]
&+ \eta^{1-q}(1+2\eta)\beta\xi^{2}\E\bigg[\int_{0}^{t}{v_{u}^{-(\beta+1)}|\tilde{v}_{u}-\bar{v}_{u}|^{q}\,du}\bigg] \nonumber\\[2pt]
&- \frac{1}{2}\hspace{1pt}q\xi^{2}(2\beta+1-q)\E\bigg[\int_{0}^{t}{v_{u}^{-\beta}|e^{v}_{u}|^{q-2}\big|\sqrt{v_{u}}-\sqrt{\bar{v}_{u}}\hspace{1pt}\big|^{2}\,du}\bigg]  \nonumber\\[2pt]
&- \frac{1}{2}\hspace{1pt}\beta\xi^{2}\big(\nu-\beta-1-2\eta(q-1)-2\eta^{2}(q-2)\big)\E\bigg[\int_{0}^{t}{v_{u}^{-(\beta+1)}|e^{v}_{u}|^{q}\,du}\bigg].
\end{align}
There are two cases to consider. First, if $q\leq2\beta+1$, as we can find $\eta>0$ small enough such that
\begin{equation}\label{eq3.1.68}
2\eta(q-1)+2\eta^{2}(q-2) \leq \beta
\end{equation}
and since $2\beta+1<\nu$ from Lemma \ref{Lem3.7}, we get
\begin{align}\label{eq3.1.69}
\E\Big[v_{t}^{-\beta}|e^{v}_{t}|^{q}\Big] &\leq \big((\beta-q)^{+}+\eta(q-1)\big)k\E\bigg[\int_{0}^{t}{v_{u}^{-\beta}|e^{v}_{u}|^{q}\,du}\bigg] + \eta^{1-q}k\E\bigg[\int_{0}^{t}{v_{u}^{-\beta}|\tilde{v}_{u}-\bar{v}_{u}|^{q}\,du}\bigg] \nonumber\\[2pt]
&+ \eta^{1-q}(1+2\eta)\beta\xi^{2}\E\bigg[\int_{0}^{t}{v_{u}^{-(\beta+1)}|\tilde{v}_{u}-\bar{v}_{u}|^{q}\,du}\bigg].
\end{align}
Second, if $q>2\beta+1$, using the triangle and AM-GM inequalities yields
\begin{equation}\label{eq3.1.70}
v_{u}\big|\sqrt{v_{u}}-\sqrt{\bar{v}_{u}}\hspace{1pt}\big|^{2} \leq |v_{u}-\bar{v}_{u}|^{2} \leq (1+\eta)|e^{v}_{u}|^{2} + \eta^{-1}(1+\eta)|\tilde{v}_{u}-\bar{v}_{u}|^{2},
\end{equation}
and hence, using Young's inequality, we get
\begin{equation}\label{eq3.1.71}
qv_{u}|e^{v}_{u}|^{q-2}\big|\sqrt{v_{u}}-\sqrt{\bar{v}_{u}}\hspace{1pt}\big|^{2} \leq \big(q+\eta(q-2)\big)(1+\eta)|e^{v}_{u}|^{q} + 2\eta^{1-q}(1+\eta)|\tilde{v}_{u}-\bar{v}_{u}|^{q}.
\end{equation}
Substituting back into \eqref{eq3.1.67} with \eqref{eq3.1.71} leads to
\begin{align}\label{eq3.1.72}
\E\Big[v_{t}^{-\beta}|e^{v}_{t}|^{q}\Big] &\leq \big((\beta-q)^{+}+\eta(q-1)\big)k\E\bigg[\int_{0}^{t}{v_{u}^{-\beta}|e^{v}_{u}|^{q}\,du}\bigg] + \eta^{1-q}k\E\bigg[\int_{0}^{t}{v_{u}^{-\beta}|\tilde{v}_{u}-\bar{v}_{u}|^{q}\,du}\bigg] \nonumber\\[2pt]
&+ \eta^{1-q}\xi^{2}\big((1+2\eta)\beta+(1+\eta)(q-2\beta-1)\big)\E\bigg[\int_{0}^{t}{v_{u}^{-(\beta+1)}|\tilde{v}_{u}-\bar{v}_{u}|^{q}\,du}\bigg] \nonumber\\[4pt]
&- \frac{1}{2}\hspace{1pt}\xi^{2}\big(\beta(\nu-\beta-1)-q(q-2\beta-1)-2\eta(q-1)(q-\beta-1)-\eta^{2}(q-1)(q-2)\big) \nonumber\\[3pt]
&\times \E\bigg[\int_{0}^{t}{v_{u}^{-(\beta+1)}|e^{v}_{u}|^{q}\,du}\bigg].
\end{align}
However, note that
\begin{equation}\label{eq3.1.73}
\beta(\nu-\beta-1)-q(q-2\beta-1) > 0
\end{equation}
 from Lemma \ref{Lem3.7} and hence we can find $\eta>0$ small enough such that
\begin{equation}\label{eq3.1.74}
 2\eta(q-1)(q-\beta-1)+\eta^{2}(q-1)(q-2) \leq \beta(\nu-\beta-1)-q(q-2\beta-1).
\end{equation}
Going back to \eqref{eq3.1.72}, we deduce that
\begin{align}\label{eq3.1.75}
\E\Big[v_{t}^{-\beta}|e^{v}_{t}|^{q}\Big] &\leq \big((\beta-q)^{+}+\eta(q-1)\big)k\E\bigg[\int_{0}^{t}{v_{u}^{-\beta}|e^{v}_{u}|^{q}\,du}\bigg] + \eta^{1-q}k\E\bigg[\int_{0}^{t}{v_{u}^{-\beta}|\tilde{v}_{u}-\bar{v}_{u}|^{q}\,du}\bigg] \nonumber\\[2pt]
&+ \eta^{1-q}\xi^{2}\big((1+2\eta)\beta+(1+\eta)(q-2\beta-1)\big)\E\bigg[\int_{0}^{t}{v_{u}^{-(\beta+1)}|\tilde{v}_{u}-\bar{v}_{u}|^{q}\,du}\bigg].
\end{align}
Let $0\leq s\leq T$. Combining \eqref{eq3.1.69} and \eqref{eq3.1.75}, taking the supremum over $[0,s]$ and using Fubini's theorem leads to
\begin{align}\label{eq3.1.76}
\sup_{t\in[0,s]}\E\Big[v_{t}^{-\beta}|e^{v}_{t}|^{q}\Big] &\leq \big((\beta-q)^{+}+\eta(q-1)\big)k\int_{0}^{s}{\sup_{t\in[0,u]}\E\Big[v_{t}^{-\beta}|e^{v}_{t}|^{q}\Big]du} \nonumber\\[2pt]
&+ \eta^{1-q}\xi^{2}\big((1+2\eta)\beta+(1+\eta)(q-2\beta-1)^{+}\big)T\sup_{t\in[0,T]}\E\Big[v_{t}^{-(\beta+1)}|\tilde{v}_{t}-\bar{v}_{t}|^{q}\Big] \nonumber\\[2pt]
&+ \eta^{1-q}kT\sup_{t\in[0,T]}\E\Big[v_{t}^{-\beta}|\tilde{v}_{t}-\bar{v}_{t}|^{q}\Big].
\end{align}
Note from Lemma \ref{Lem3.7} that we can find $r>1$ such that
\begin{equation}\label{eq3.1.77}
\frac{\nu}{\nu-\beta-1} < \frac{r}{r-1} < \frac{2(\nu-\bar{\nu}-1)}{q}\hspace{1pt}.
\end{equation}
Applying H\"older's inequality yields, for $i\in\{0,1\}$,
\begin{equation}\label{eq3.1.78}
\sup_{t\in[0,T]}\E\Big[v_{t}^{-(\beta+i)}|\tilde{v}_{t}-\bar{v}_{t}|^{q}\Big] \leq \sup_{t\in[0,T]}\E\Big[v_{t}^{-r(\beta+i)}\Big]^{\frac{1}{r}}\sup_{t\in[0,T]}\E\Big[|\tilde{v}_{t}-\bar{v}_{t}|^{\frac{rq}{r-1}}\Big]^{\frac{r-1}{r}}.
\end{equation}
Substituting back into \eqref{eq3.1.76} with \eqref{eq3.1.78} and using Gronwall's inequality leads to
\begin{align}\label{eq3.1.79}
\sup_{t\in[0,T]}\E\Big[v_{t}^{-\beta}|e^{v}_{t}|^{q}\Big] &\leq \sup_{t\in[0,T]}\E\Big[|\tilde{v}_{t}-\bar{v}_{t}|^{\frac{rq}{r-1}}\Big]^{\frac{r-1}{r}}e^{((\beta-q)^{+}+\eta(q-1))kT}\bigg\{\eta^{1-q}kT\sup_{t\in[0,T]}\E\Big[v_{t}^{-r\beta}\Big]^{\frac{1}{r}} \nonumber\\[2pt]
&+ \eta^{1-q}\xi^{2}\big((1+2\eta)\beta+(1+\eta)(q-2\beta-1)^{+}\big)T\sup_{t\in[0,T]}\E\Big[v_{t}^{-r(\beta+1)}\Big]^{\frac{1}{r}}\bigg\}.
\end{align}
Using Lemma \ref{Lem3.1}, Proposition \ref{Prop3.6} and \eqref{eq3.1.77}, we conclude that there exist $N_{0}\in\mathbb{N}$ and a constant $C$ such that, for all $N>N_{0}$,
\begin{equation}\label{eq3.1.80}
\sup_{t\in[0,T]}\E\Big[v_{t}^{-\beta}|e^{v}_{t}|^{q}\Big] \leq CN^{-\frac{q}{2}}.
\end{equation}
From H\"older's inequality, Lemma \ref{Lem3.1} and \eqref{eq3.1.80}, we have
\begin{equation}\label{eq3.1.81}
\sup_{t\in[0,T]}\E\Big[|e^{v}_{t}|^{p}\Big] \leq \sup_{t\in[0,T]}\E\Big[v_{t}^{-\beta}|e^{v}_{t}|^{q}\Big]^{\frac{p}{q}}\sup_{t\in[0,T]}\E\Big[v_{t}^{\frac{\beta p}{q-p}}\Big]^{1-\frac{p}{q}} \leq CN^{-\frac{p}{2}}.
\end{equation}
Finally, upon noticing that
\begin{equation}\label{eq3.1.82}
|v_{t}-\bar{v}_{t}| \leq |v_{t}-\tilde{v}_{\bar{t}}| \leq |e^{v}_{t}| + |\Delta\tilde{v}_{t}|,
\end{equation}
the conclusion follows from \eqref{eq3.1.47} and \eqref{eq3.1.81}.
\end{proof}

\section{Numerical results}\label{sec:numerics}

In this section, we perform a numerical analysis of the strong convergence of the FTE scheme. We denote by $\bar{v}_{T,\hspace{1pt}N}$ the value at time $T$ of the approximation process corresponding to an equidistant discretization with $N$ time steps, and study the $L^{1}$ error
\begin{equation}\label{eq4.1}
\varepsilon(N) = \E\big[|v_{T}-\bar{v}_{T,\hspace{1pt}N}|\big].
\end{equation}

\begin{figure}[!htb]
\centering
\begin{subfigure}{.5\textwidth}
  \centering
	\captionsetup{justification=centering}
  \includegraphics[width=.98\linewidth,height=1.68in]{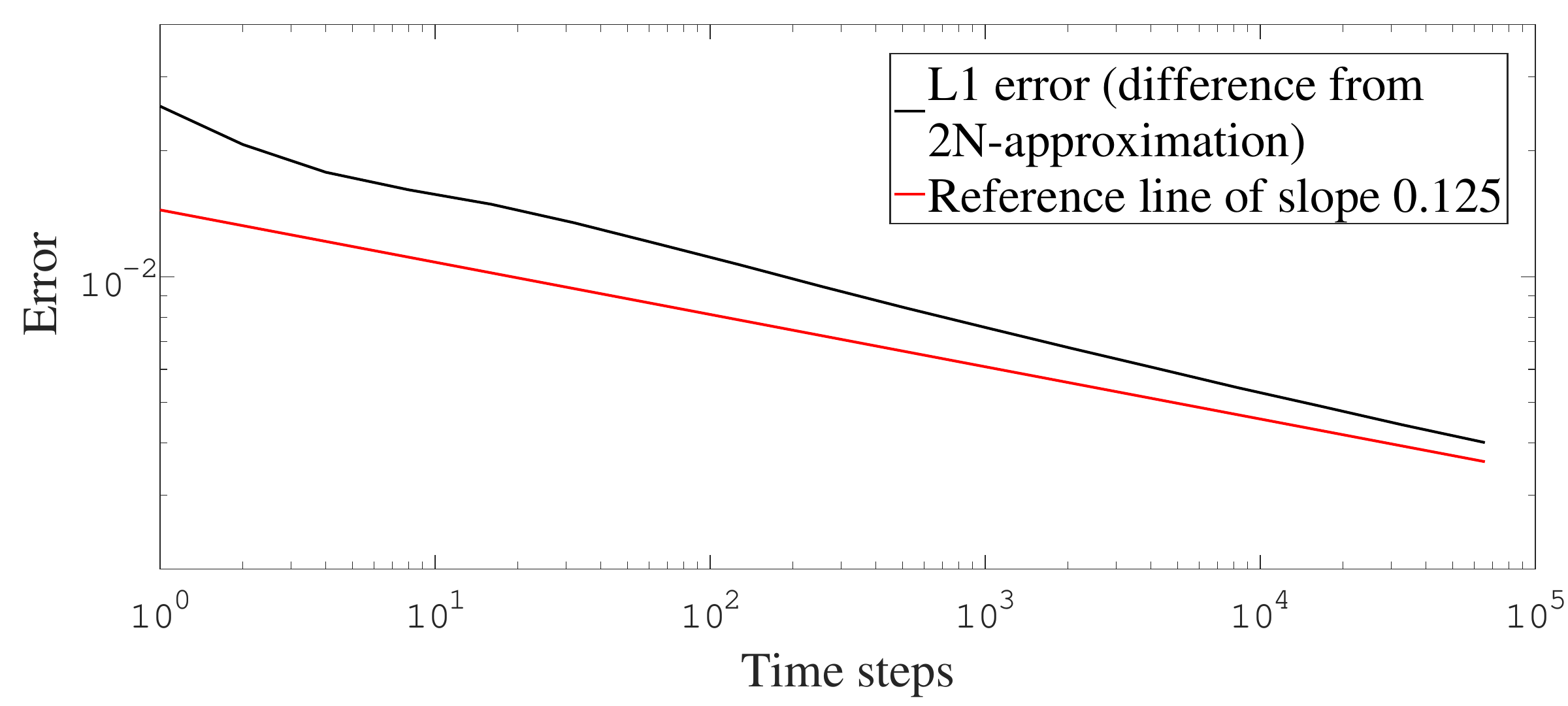}
  \caption{$\nu=0.125$}
  \label{fig:1a}
\end{subfigure}%
\begin{subfigure}{.5\textwidth}
  \centering
	\captionsetup{justification=centering}
  \includegraphics[width=.98\linewidth,height=1.68in]{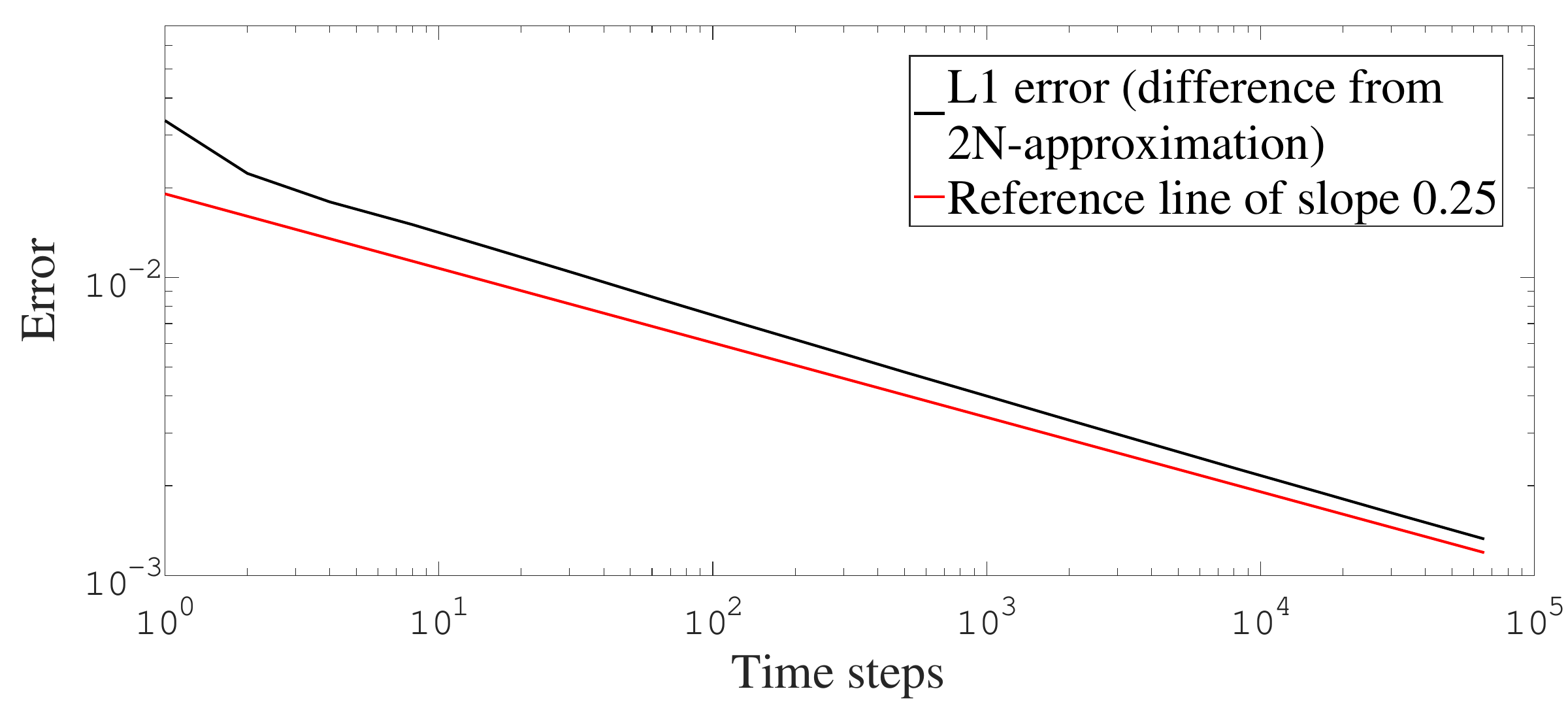}
  \caption{$\nu=0.25$}
  \label{fig:1b}
\end{subfigure} \\[.5em]
\begin{subfigure}{.5\textwidth}
  \centering
	\captionsetup{justification=centering}
  \includegraphics[width=.98\linewidth,height=1.68in]{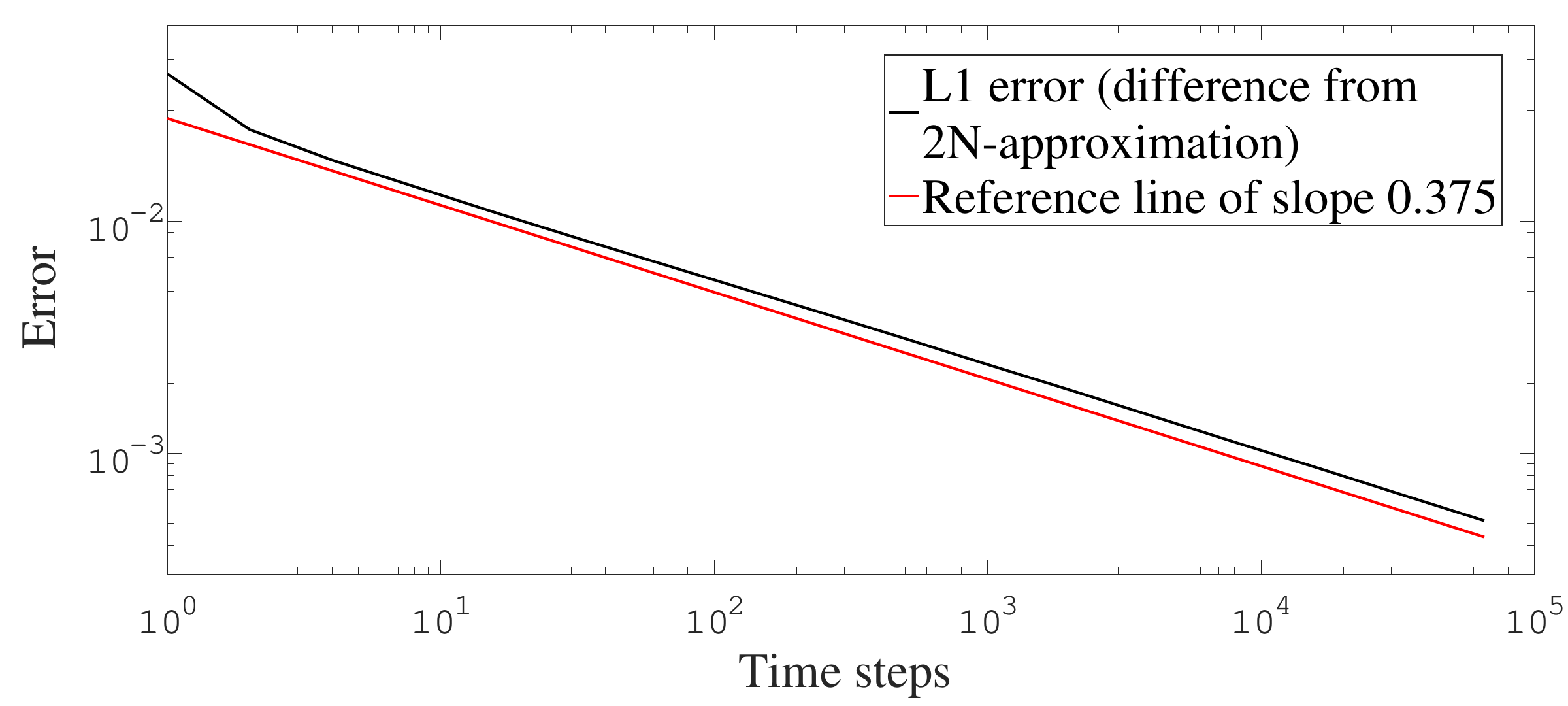}
  \caption{$\nu=0.375$}
  \label{fig:1c}
\end{subfigure}%
\begin{subfigure}{.5\textwidth}
  \centering
	\captionsetup{justification=centering}
  \includegraphics[width=.98\linewidth,height=1.68in]{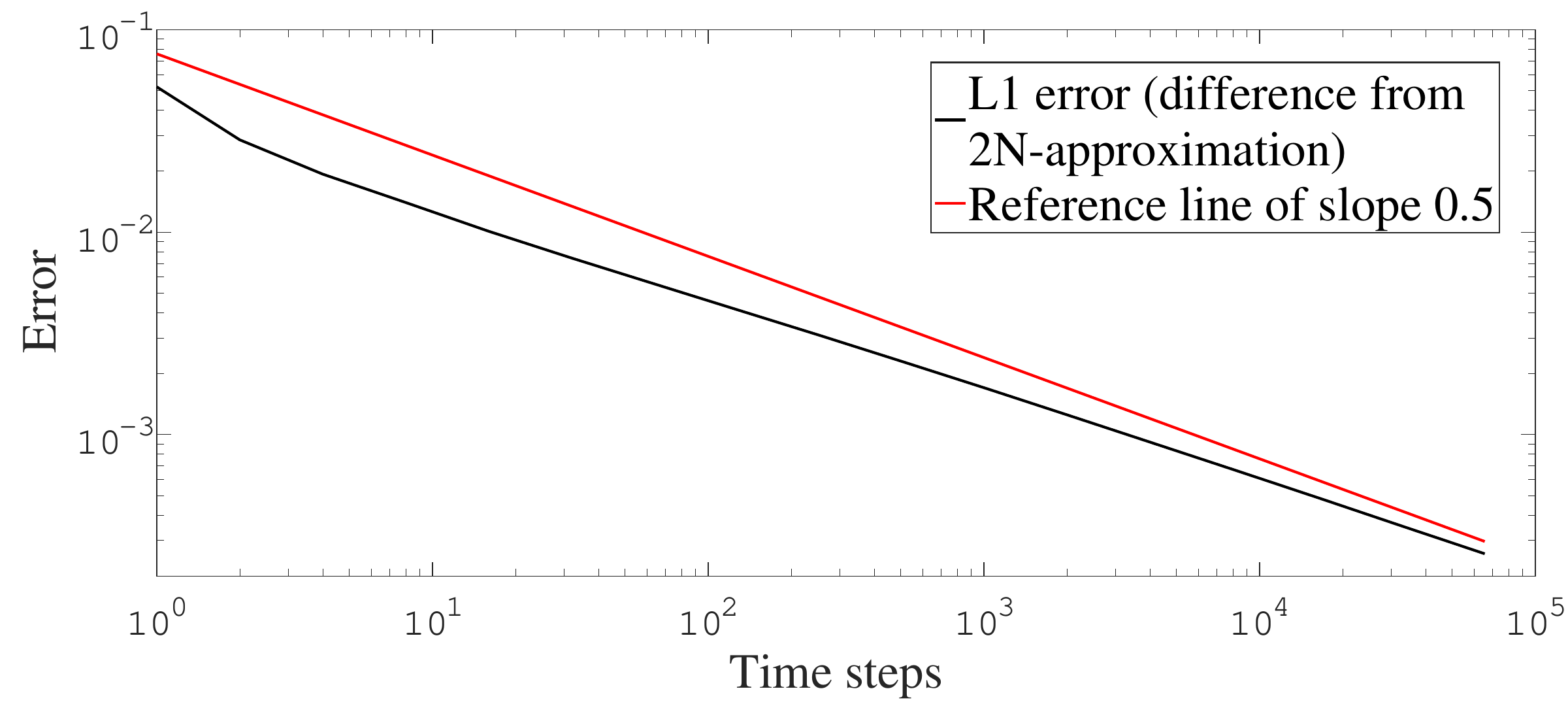}
  \caption{$\nu=0.5$}
  \label{fig:1d}
\end{subfigure} \\[.5em]
\begin{subfigure}{.5\textwidth}
  \centering
	\captionsetup{justification=centering}
  \includegraphics[width=.98\linewidth,height=1.68in]{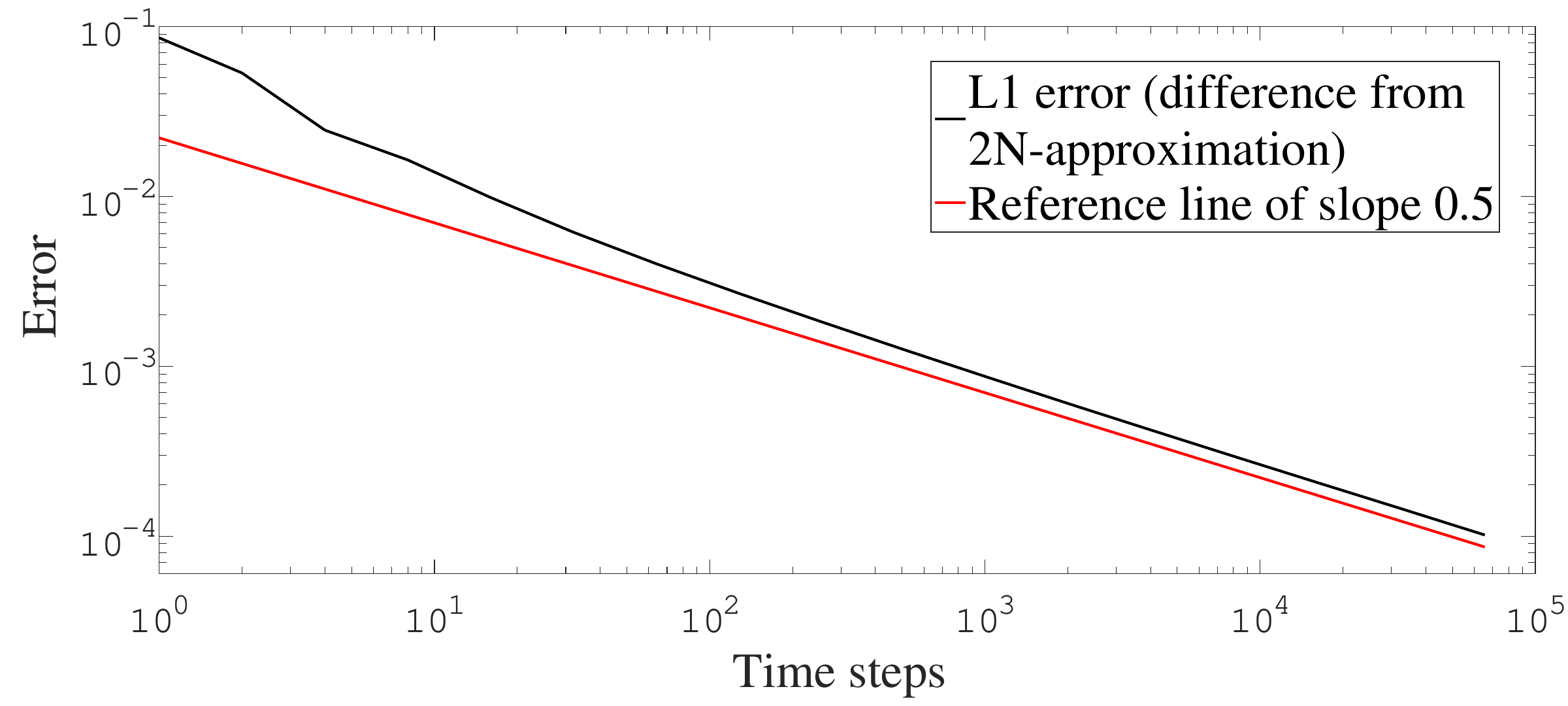}
  \caption{$\nu=1.0$}
  \label{fig:1e}
\end{subfigure}%
\begin{subfigure}{.5\textwidth}
  \centering
	\captionsetup{justification=centering}
  \includegraphics[width=.98\linewidth,height=1.68in]{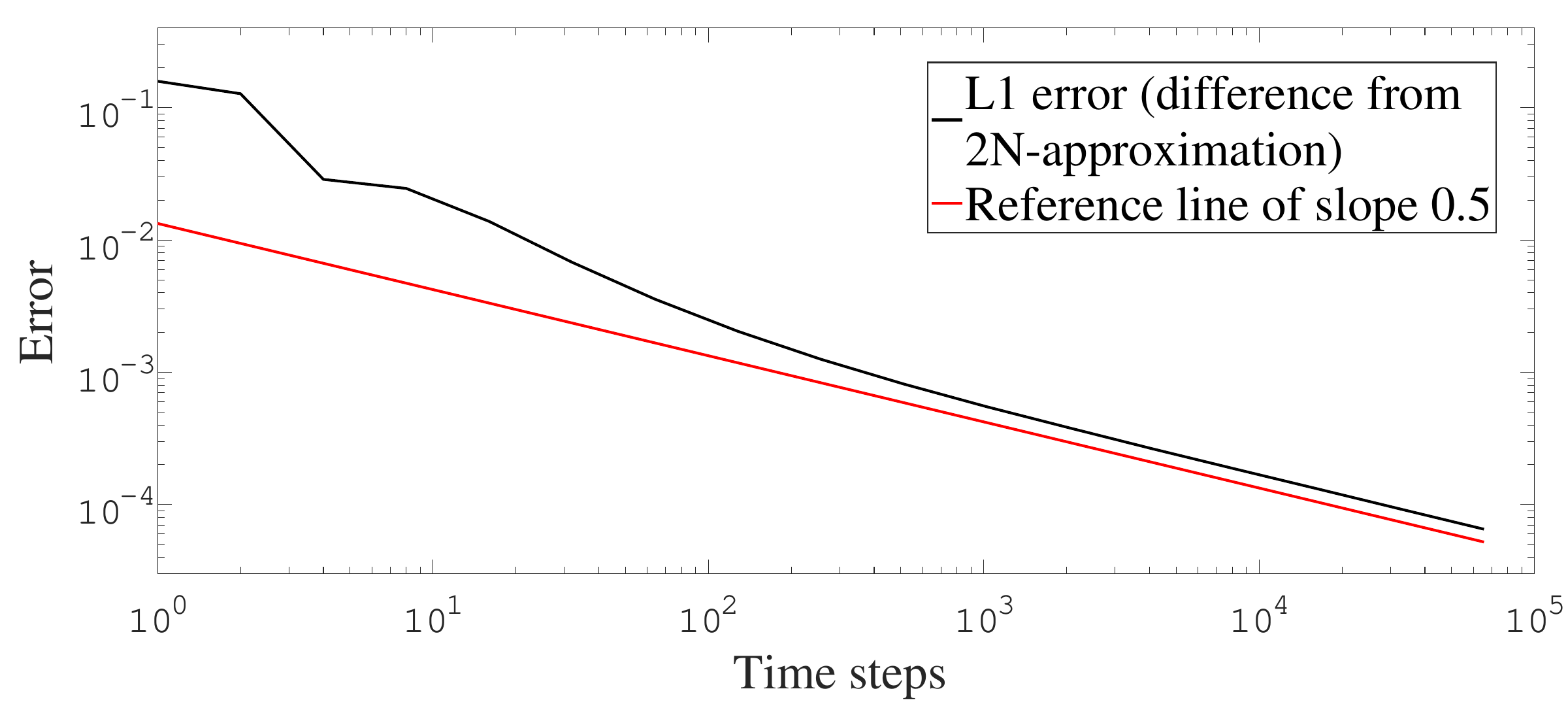}
  \caption{$\nu=2.0$}
  \label{fig:1f}
\end{subfigure} \\[.5em]
\begin{subfigure}{.5\textwidth}
  \centering
	\captionsetup{justification=centering}
  \includegraphics[width=.98\linewidth,height=1.68in]{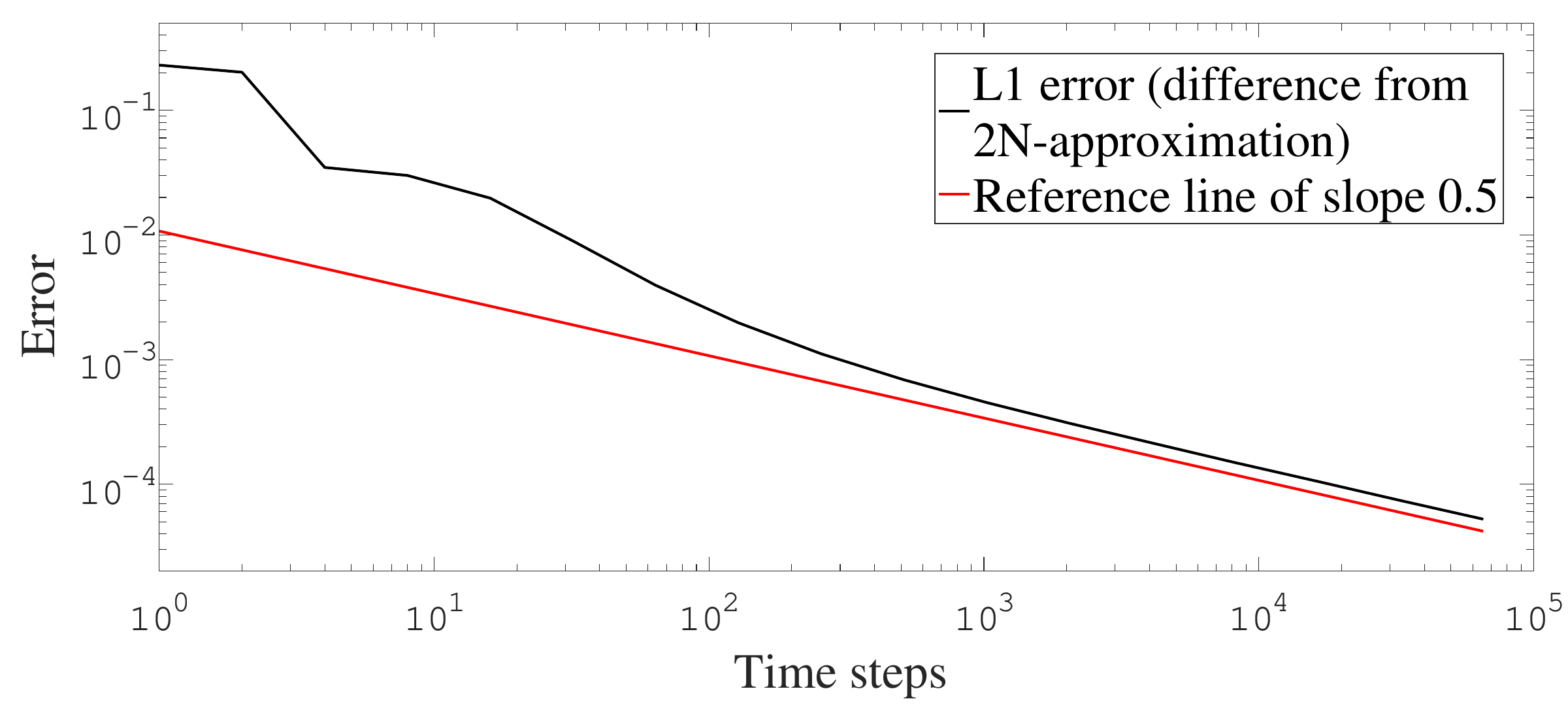}
  \caption{$\nu=3.0$}
  \label{fig:1g}
\end{subfigure}%
\begin{subfigure}{.5\textwidth}
  \centering
	\captionsetup{justification=centering}
  \includegraphics[width=.98\linewidth,height=1.68in]{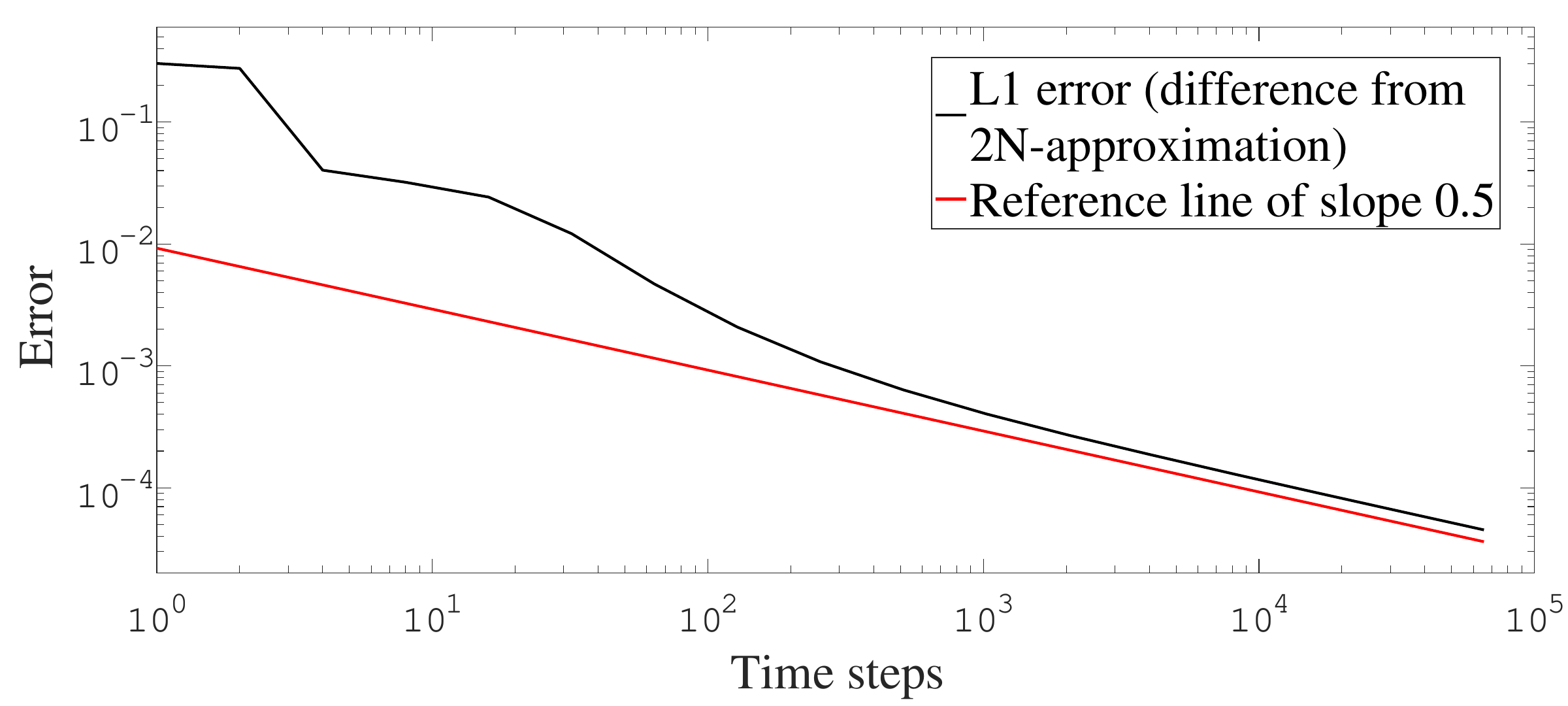}
  \caption{$\nu=4.0$}
  \label{fig:1h}
\end{subfigure}
\caption{The log-plots of the $L^{1}$ errors against the number of time steps when $v_{0}=0.02$, $\theta=0.02$, $k\in\{2.0,4.0,6.0,8.0,16.0,32.0,48.0,64.0\}$, $\xi=0.8$ and $T=1.0$, computed using $2\!\times\!10^{6}$ Monte Carlo paths (for a relative error less than 10 bps).}
\label{fig:1}
\end{figure}

Due to the difficulty in computing this quantity, we estimate as proxy the difference between the values of the approximation process corresponding to $N$ time steps ($\bar{v}_{T\hspace{-1pt},\hspace{1pt}N}$) and $2N$ time steps ($\bar{v}_{T,\hspace{1pt}2N}$) for the same Brownian path, and use the fact that, for any $\alpha>0$,
\begin{equation}\label{eq4.2}
\varepsilon(N) = \mathcal{O}\big(N^{-\alpha}\big) \hspace{1em}\Leftrightarrow\hspace{1em} \E\big[|\bar{v}_{T,\hspace{1pt}N}-\bar{v}_{T,\hspace{1pt}2N}|\big] = \mathcal{O}\big(N^{-\alpha}\big).
\end{equation}
A proof of \eqref{eq4.2} can be found, for instance, in \cite{Alfonsi:2005}. The data in Figure \ref{fig:1} suggest an empirical $L^{1}$ order of $\nu\wedge\frac{1}{2}$, which is in line with our theoretical results when $\nu>3$. We mention that this $L^{1}$ convergence order was demonstrated theoretically (for all $\nu>0$) and numerically (for $\nu=0.25$) for the truncated Milstein scheme \cite{Hefter:2016}. We now recall the main result in \cite{Hefter:2017}, which established a lower error bound for all discretization schemes for the CIR process based on equidistant evaluations of the Brownian driver in the case of an accessible boundary point.
\begin{proposition}[Theorem 1 in \cite{Hefter:2017}]\label{Prop4.1}
If $\nu<1$, then discretization methods for the CIR process $v$ based on equidistant evaluations of the driving noise process achieve at most a strong convergence order of $\nu$, i.e., there exists a positive constant $C$ such that, for all $N\geq1$,
\begin{equation}\label{eq4.3}
\inf_{\varphi\hspace{.5pt}:\hspace{.5pt}\mathbb{R}^{N}\rightarrow\mathbb{R}}\E\Big[\big|v_{T}-\varphi\big(W^{v}_{t_{1}},W^{v}_{t_{2}},\dots,W^{v}_{t_{N}}\big)\big|\Big] \geq CN^{-\nu}.
\end{equation}
\end{proposition}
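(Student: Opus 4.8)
The statement is a lower bound for the error of \emph{every} measurable approximation of $v_{T}$ that depends only on the $N$ equidistant observations $W^{v}_{t_{1}},\dots,W^{v}_{t_{N}}$ of the driving Brownian motion, so the natural route is an information-based complexity argument. The plan is, first, to reduce the infimum over $\varphi$ to an intrinsic dispersion functional of the conditional law of $v_{T}$ given $\mathcal{G}_{N}:=\sigma(W^{v}_{t_{1}},\dots,W^{v}_{t_{N}})$, and second, to bound this dispersion below by a constant times $N^{-\nu}$ using the behaviour of the CIR process at the origin, which is an accessible boundary precisely because $\nu<1$.

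For the reduction I would use the elementary inequality that, for any integrable random variable $X$, any sub-$\sigma$-algebra $\mathcal{G}$ and any $\mathcal{G}$-measurable $Y$, one has $\E|X-Y|\geq\tfrac12\,\E|X-X'|$, where $X'$ is conditionally independent of $X$ given $\mathcal{G}$ and has the same conditional law (a consequence of the conditional triangle inequality). Taking $X=v_{T}$ and $\mathcal{G}=\mathcal{G}_{N}$ gives
\[
\inf_{\varphi}\E\big[\big|v_{T}-\varphi\big(W^{v}_{t_{1}},\dots,W^{v}_{t_{N}}\big)\big|\big]\ \geq\ \tfrac12\,\E\big|v_{T}-\hat{v}_{T}\big|,
\]
where $\hat{v}$ is a second CIR process, driven by a Brownian motion that agrees with $W^{v}$ at the grid points $t_{1},\dots,t_{N}$ but whose bridges on the subintervals $[t_{n-1},t_{n}]$ are independent of those of $W^{v}$. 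It then remains to show that the \emph{unobserved} Brownian bridges force an irreducible discrepancy, $\E|v_{T}-\hat{v}_{T}|\gtrsim N^{-\nu}$; this is where the square-root diffusion coefficient, and hence the parameter $\nu$, enters.

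The core is a near-boundary estimate. The analytic input is the scaling of the CIR law near $0$: since the transition density of $v$ behaves like $x^{\nu-1}$ as $x\downarrow0$, one has $\Prob(v_{t}\le x)\asymp x^{\nu}$, uniformly for $t$ bounded away from $0$; moreover, since $\nu<1$, the paths of $v$ reach $0$ almost surely. I would localise on a single grid interval $[t_{i-1},t_{i}]$ with $t_{i}\le T/2$ (so that a macroscopic amount of time remains afterwards) and on the event that $v_{t_{i-1}}$ is of the critical order $\delta t=T/N$, which has probability $\asymp(\delta t)^{\nu}=cN^{-\nu}$; on this event, whether and where $v$ actually hits $0$ inside $[t_{i-1},t_{i}]$ is not $\mathcal{G}_{N}$-measurable, so the ensuing excursion — driven by a bridge independent of the one driving $\hat{v}$ — decouples the two processes near the origin. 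One then argues that two CIR processes restarted near $0$ with (conditionally) independent future driving noise retain a conditional dispersion of order one up to time $T$, because near the non-Lipschitz boundary the pinning at the grid points no longer keeps them together; multiplying the two factors yields $\E|v_{T}-\hat{v}_{T}|\gtrsim cN^{-\nu}$, i.e., establishing \eqref{eq4.3}.

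The main obstacle — and what distinguishes the exponent $\nu$ from the generic rate $\tfrac12$ valid for Lipschitz SDEs — is precisely the accounting near the boundary: because $\xi\sqrt{\cdot}$ is not Lipschitz at $0$, the stochastic flow is neither a contraction nor a uniform dilation there, so one has to rule out that a later excursion of $v$ (or of $\hat{v}$) through the region near $0$ damps the discrepancy back below the $N^{-\nu}$ scale, and one has to show the critical scale $\delta t$ is the one that optimises the product of the probability and the resulting dispersion. Making this rigorous requires sharp two-sided control of the law of the post-zero process and of hitting times of small levels, which is cleanest through the squared-Bessel representation of $v$ and the exact scaling (self-similarity) of squared-Bessel processes of dimension $2\nu\in(0,2)$.
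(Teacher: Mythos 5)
First, a point of reference: the paper does not prove this proposition --- it is imported verbatim as Theorem~1 of \cite{Hefter:2017} and used only to interpret the numerical results --- so your attempt can only be measured against what a correct argument must contain. Your reduction step is sound and standard: by pathwise uniqueness $v_{T}$ is a measurable functional of the driving path, so replacing the Brownian bridges on the subintervals by independent copies produces a $\hat{v}_{T}$ that is conditionally i.i.d.\ with $v_{T}$ given $\mathcal{G}_{N}=\sigma\big(W^{v}_{t_{1}},\dots,W^{v}_{t_{N}}\big)$, and the triangle inequality gives $\inf_{\varphi}\E|v_{T}-\varphi|\geq\frac{1}{2}\E|v_{T}-\hat{v}_{T}|$. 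The scaling inputs you invoke ($\Prob(v_{t}\leq x)\asymp x^{\nu}$, critical spatial scale $\xi^{2}\delta t$, squared-Bessel self-similarity) are also the right ones.

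The gap is your central quantitative claim: that on the single event $A_{i}=\{v_{t_{i-1}}\asymp\delta t\}$, of probability $\asymp(\delta t)^{\nu}$, the pair $(v_{T},\hat{v}_{T})$ retains conditional dispersion of order \emph{one}, so that the product of the two factors is $N^{-\nu}$. This claim is not just unproved; it is false. After time $t_{i}$ the two processes are still driven by Brownian paths that agree at every remaining grid point, and that pinning continues to hold them together: for two CIR solutions driven by the \emph{same} Brownian motion the comparison theorem and the linear drift give $\E\big[|v_{t}-\hat{v}_{t}|\,\big|\,\text{past at }s\big]=e^{-k(t-s)}|v_{s}-\hat{v}_{s}|$, so the $O(\delta t)$ gap created on $[t_{i-1},t_{i}]$ does not grow, while each later independent bridge injects new dispersion only of order $(\delta t)^{1+\nu}$ (a spread of order $\delta t$ on an event of probability $(\delta t)^{\nu}$). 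Moreover, the claim cannot hold for all $i$ simultaneously: the expected number of grid points with $v_{t_{i}}\lesssim\delta t$ is $\asymp N(\delta t)^{\nu}\to\infty$, and order-one conditional dispersion at each such visit is incompatible with $\E[v_{T}\,|\,\mathcal{G}_{N}]\to v_{T}$ as the grid refines (equivalently, with the $O(N^{-\min(\nu,1/2)})$ upper bound achieved by the truncated Milstein scheme of \cite{Hefter:2016}, which is $\mathcal{G}_{N}$-measurable). What a single near-boundary interval actually contributes is $\asymp(\delta t)^{1+\nu}=N^{-1-\nu}$, which falls far short of $N^{-\nu}$; the rate $N^{-\nu}$ arises only by \emph{accumulation}, $N\cdot(\delta t)^{1+\nu}\asymp N^{-\nu}$, over all $N$ intervals. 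The genuinely hard step --- entirely absent from your sketch --- is to show that these $N$ per-interval discrepancies do not cancel in $L^{1}$: the natural telescoping (resample one bridge at a time and propagate each discrepancy forward via the comparison theorem) bounds each summand from below but controls only $\sum_{n}|\cdot|$, whereas one needs a lower bound on $\big|\sum_{n}(\cdot)\big|$. Until that non-cancellation is established, the argument does not reach the stated bound.
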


In particular, the bound in \eqref{eq4.3} suggests an optimal performance -- in the $L^{1}$ sense -- of the FTE scheme in half of the regime where the boundary point is accessible, specifically, when $\nu\leq\frac{1}{2}$.

\section{Conclusions}\label{sec:conclusion}

This work has answered questions concerning the convergence order of the full truncation Euler scheme for the Cox--Ingersoll--Ross process. This scheme is often encountered in the mathematical finance literature in the context of Monte Carlo simulations for multi-dimensional models with Cox--Ingersoll--Ross dynamics in one or more components. One consequence of this work is that we can establish positive strong convergence rates for approximations of these models. For instance, using the log-Euler scheme and the full truncation Euler scheme to discretize the asset price process and the squared volatility in the Heston model, respectively, yields an approximation for which we can easily deduce the strong convergence with order 1/2, by using Theorem \ref{Thm3.8} together with a moment bound result of \cite{Cozma:2016a}. Inevitably, this work also raises some questions, like whether we can relax the condition on the parameters in the inaccessible boundary case without losing the convergence, or whether we can deduce similar properties of the scheme in the accessible boundary case.

\bibliographystyle{abbrv}
\bibliography{references}

\end{document}